\newcommand{\proofparagraph}[1]{\medskip\emph{#1}}
\newcommand{\proofparagraphns}[1]{\emph{#1}}
\newcommand{\meq}{\ensuremath{\ = \ }}
\newcommand{\mcoloneqq}{\ensuremath{\ \coloneqq \ }}
\newcommand{\mgeq}{\ensuremath{\ \geq \ }}
\newcommand{\fami}[1]{\ensuremath{\mathcal{#1}}}
\newcommand{\geom}[1]{\ensuremath{\mathfrak{#1}}}
\newcommand{\univ}[1]{\ensuremath{\mathbb{#1}}}
\newcommand{\comdots}{, \ldots, }
\newcommand{\ival}[2]{\ensuremath{\{#1\comdots #2\}}}
\newcommand{\ivals}[1]{\ival{1}{#1}}
\newcommand{\C}{\ensuremath{\mathcal{C}}\xspace}
\newcommand{\T}{\ensuremath{\mathcal{T}}\xspace}
\newcommand{\poly}{\ensuremath{\operatorname{poly}}}
\DeclareMathOperator{\glue}{\circ}
\newcommand{\no}{no}
\newcommand{\yes}{yes}
\newcommand{\noin}{\no-instance}
\newcommand{\yesin}{\yes-instance}
\newcommand{\cNP}{\ensuremath{\text{NP}}\xspace}
\newcommand{\fpt}{fixed-pa\-ram\-e\-ter tractable}
\newcommand{\fpty}{fixed-parameter tractability}
\newcommand{\wrt}{with respect to}
\newcommand{\Hyp}{\ensuremath{\mathcal{H}}}
\newcommand{\hyp}{\Hyp}
\newcommand{\Ver}{{V}}
\newcommand{\nEd}{\ensuremath{m}}
\newcommand{\nVer}{\ensuremath{n}}
\newcommand{\twnrl}{\ensuremath{\tau}}
\renewcommand{\S}{\ensuremath{\mathcal{S}}\xspace}
\newcommand{\G}{\ensuremath{\mathcal{G}}\xspace}
\newcommand{\HE}{\ensuremath{\fami{E}}} %
\newcommand{\Ed}{\HE}
\newcommand{\HV}{\ensuremath{V}} %
\newcommand{\layers}{\ensuremath{r}\xspace}
\newcommand{\lrs}{\layers}
\newcommand{\splane}{$\geom{S}$\nobreakdash-plane}
\newcommand{\scbd}{sphere-cut branch decomposition}
\newcommand{\loutp}{$\layers$\nobreakdash-out\-er\-pla\-nar}
\newcommand{\outp}[1]{\ensuremath{#1}\nobreakdash-out\-er\-pla\-nar}
\newcommand{\bdrd}[1]{\ensuremath{#1}\nobreakdash-bound\-aried}
\newcommand{\Oh}{\ensuremath{\operatorname{O}}}%
\newcommand{\PS}{\textsc{$\layers$-Outerplanar Support}}%
\newcommand{\pPS}{\textsc{$\layers$-Outerplanar Support}}%
\newcommand{\PSprobdef}{\problemdef{\PS}{A connected
    hypergraph $\Hyp$ with \(n\)~vertices and \(m\)~hyperedges, and \(\lrs\in\mathbb N\).}{Does $\Hyp$ admit an \loutp\ support?}}
\newcounter{mppath}
\DeclareDocumentCommand\mppath{ o m }{%
   \addtocounter{mppath}{1}
   \def\fname{path\themppath.tmp}
   \input{\fname}
}
\tikzstyle {vert} = [circle, thick, fill=white, draw, inner sep=0pt, minimum size=2mm]
\newenvironment{mathenum}{%
  \begin{enumerate}[(i)]%
  }{%
  \end{enumerate}%
}
\newenvironment{lemenum}{\begin{compactenum}[(i)]}{\end{compactenum}}
\newtheorem{theorem}{Theorem}[section]
\newtheorem{rrule}[theorem]{Rule}
\newtheorem{definition}[theorem]{Definition}
\newtheorem{lemma}[theorem]{Lemma}
\newcommand{\prob}[6]{%
  \begin{quote}
    \begin{labeling}{#6}%
      \setlength\topsep{-.6ex} \setlength\itemsep{-.8ex}
    \item[#1]
    \item[\emph{#2}]#3
    \item[\emph{#4}]#5
    \end{labeling}%
  \end{quote}%
}
\newcommand{\problemdef}[3]{\prob{#1}{Input:}{#2}{Question:}{#3}{as}}
\crefname{property}{Property}{Properties}
\crefname{stat}{Statement}{Statements}
\crefname{rrule}{Rule}{Rules}
\newcommand{\citet}[2][]{\textcite[][#1]{#2}}
\title{The role of twins in computing planar supports of hypergraphs%
  \thanks{An extended abstract of this work appeared at GD'16
    \cite{BKK+17}.
    This version provides full proof details, more illustrative figures, and a simpler example for the necessity of twins with, additionally, smallest-possible hyperedge size.}
}
\author{René van Bevern\thanks{Results reported in this article are unrelated to the work at Huawei.}\\
  \small Huawei Technologies Co., Ltd.\\
  \small \texttt{rene.van.bevern@huawei.com}
  \and
  Iyad Kanj\\
  \small DePaul University,\\
  \small Chicago, USA\\
  \small \texttt{ikanj@cs.depaul.edu}
  \and
  Christian Komusiewicz\\
  \small Phillips-Universität Marburg,\\
  \small Marburg, Germany\\
  \small \texttt{komusiewicz@informatik.uni-marburg.de}
  \and
  Rolf~Niedermeier\\
  \small Algorithmics and Computational Complexity\\
  \small Faculty~IV, TU~Berlin, Berlin, Germany\\
  \small \texttt{rolf.niedermeier@tu-berlin.de}
  \and
  Manuel Sorge\\
  \small Institute of Logic and Computation\\
  \small TU Wien, Vienna, Austria\\
  \small \texttt{manuel.sorge@ac.tuwien.ac.at}}
\begin{document}

\maketitle

\begin{abstract}
  \looseness=-1
  A \emph{support} or \emph{realization}
  of a hypergraph~$\Hyp{}$
  is a graph~\(G\) on the same vertex set as~\(\Hyp\) such that
  for each hyperedge of~$\Hyp{}$ it holds that its vertices induce a 
  connected subgraph of~$G$.
  The NP-hard problem of finding a \emph{planar} support
  has applications in hypergraph drawing
  and network design.
  Previous algorithms for the problem assume that
  \emph{twins}---pairs of vertices that are in precisely
  the same hyperedges---can safely be removed from the input hypergraph.
  We prove that this assumption is generally wrong,
  yet that the number of twins
  necessary for a hypergraph %
  to have a planar support only depends on its number of hyperedges.
  We give an explicit upper bound on the number of twins
  necessary
  for a hypergraph with \(m\)~hyperedges
  to have an \(r\)-outerplanar support,
  which depends only on~\(r\) and~\(m\).
  Since all additional twins can be safely removed,
  we obtain a linear-time
  algorithm for computing \(r\)-outerplanar supports
  for hypergraphs with \(m\)~hyperedges
  if $m$ and $r$
   are constant; in other words, the problem is fixed-parameter linear-time
   solvable with respect to the parameters $m$ and~$r$.

   \bigskip
   \noindent \textbf{Keywords:} Subdivision drawings, NP-hard problem, $r$-outerplanar graphs, sphere-cut branch~decomposition
 \end{abstract}

\newcommand{\sepnum}{\ensuremath{2^{m \cdot (2\layers^2 + \layers + 1)}}}
\newcommand{\signumhalved}{\ensuremath{(r + 1)^{32r^2 + 8r}}}
\newcommand{\signumbits}{\ensuremath{(32r^2 + 8r)\cdot \log(r + 1) + 1}}

\newcommand{\suppsize}{\ensuremath{2^{6r \cdot 2^{\nEd{} \cdot (2\lrs^2 + \lrs + 1)} \cdot \signumhalved}}}

\section{Introduction}\label{sp:sec:intro}
Hypergraph drawings are useful as visual aid in diverse applications~\cite{AMAHMR16}, among them electronic circuit design~\cite{EGB06,FLR97} and relational databases~\cite{Mak90,BFMY83}.
This led to several generalizations of the concept of planarity
from graphs to hypergraphs.
The earliest among them is the attempt of Zykov~\cite{Zyk74},
who defined a hypergraph to be planar if its incidence graph is; this is equivalent
  to the requirement that one can draw the hyperedges
  as closed regions
  in such a way that each intersection of hyperedges
  contains exactly one vertex \cite{Zyk74}.
\citet{VF84} introduced \emph{planar realizations} (for an English reproduction of the results refer to the book of \citet{FLR97})
which
nowadays are better known as \emph{planar supports}~\cite{JP87,KKS08}:
a \emph{support} for a hypergraph~$\Hyp = (\HV, \HE)$ is a graph~$G$ on the same vertex set as~$\Hyp$ such that each hyperedge~$F \in \HE$ induces a connected subgraph~$G[F]$.
This is a generalization of planarity:
an ordinary graph is planar if and only
if it has a planar support when viewed as a hypergraph.

\looseness=-1
We study the NP-complete \cite{AS87,JP87}
problem of recognizing
hypergraphs that allow for a planar support.
These
are exactly the hypergraphs allowing for a
\emph{subdivision drawing}~\cite{JP87,KKS08}: given a hypergrap~$\Hyp$, we divide the plane into closed regions that one-to-one correspond to the vertices of~$\Hyp$ in such a way that, for each hyperedge~$F$, the union of the regions corresponding to the vertices in~$F$ is connected. Subdivision drawings have also been called \emph{vertex-based Venn diagrams}~\cite{JP87}.
\cref{fig:subset-standard-vs-subdivision} shows an example for such a drawing.
\begin{figure}[t!]
  \centering
  \hfill
  \includegraphics{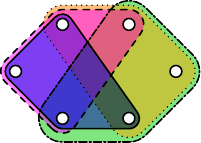}
  \hfill
  \includegraphics{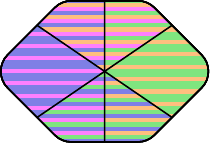}
  \hfill\mbox{}
  \caption{Two drawings of the same hypergraph. On the left, we see a
    drawing in the \emph{subset standard} in which the vertices (white
    circles) are enclosed by curves that correspond to hyperedges. On
    the right, we see a \emph{subdivision drawing} in which we assign
    vertices to regions (enclosed by black lines) and we color these
    regions with colors that one-to-one correspond to the hyperedges;
    for each hyperedge, the union of the regions of the vertices in that hyperedge
    is connected.}%
  \label{fig:subset-standard-vs-subdivision}
\end{figure}

\looseness=-1
Having a planar support, or equivalently, a subdivision drawing,
is a rather general concept of planar embeddings:
for example, each hypergraph that
has a planar incidence graph
or a well-formed Euler diagram \cite{FFH08}
has a planar support.
Still,
in the same way that most ordinary graphs are not planar,
most hypergraphs do not have planar supports \cite{FLR97}.
Actually \emph{finding} them might be even more complicated
by the fact that
several works on planar supports assume that the input hypergraph is twinless,
that is, there are no two vertices contained in precisely the same hyperedges (see Mäkinen~\cite[p.~179]{Mak90}, \citet[p.~535]{BKMSV11}, and \citet[p.~399]{KKS08}). Twins do not seem useful at first glance: whatever role one vertex can play to obtain a  planar support,
its twin can also fulfill. One of our contributions is disproving the
general validity of this assumption in \cref{sec:twins-planar}.  More specifically, we give a hypergraph with two twins that has a planar support but after removing one twin it ceases to have one.
Thus, twins may be crucial
to allow for a planar support.

\looseness=-1
More generally, we can construct hypergraphs with \(\ell\)~twins that allow for a planar support but cease doing so when removing one of the twins.  However, the number of hyperedges in the construction grows with~$\ell$. It is thus natural to ask whether there is a function~$\psi\colon \mathbb{N} \to \mathbb{N}$ such that, in each hypergraph with $m$~hyperedges, we can forget all but $\psi(m)$~twins while maintaining the property of having a planar support. Using well-quasi orderings, one can prove the \emph{existence} of such a function~$\psi$ (see \cref{sec:nonuni-fpt}), yet finding a closed form for~$\psi$ turned out to be surprisingly difficult: so far we could only compute a concrete upper bound when considering a second parameter, the outerplanarity number~$r$ of the desired planar support.
A graph is \emph{\loutp} if it admits a planar embedding (without edge crossings) which has the property that, after $\lrs$ times of removing all vertices on the outer face, we obtain an empty graph.
The outerplanarity number~$r$ of the support roughly translates to the number of layers in a corresponding drawing which can be seen by examining the construction of a subdivision drawing from a support given by Kaufmann, Kreveld, and Speckmann~\cite[p.~401]{KKS08} for example.
Formally, we study the following problem (special cases of which were also considered previously~\cite{BCPS11,BKMSV11}):

\PSprobdef
\noindent Herein, a hypergraph is connected if for every pair of vertices $u, v$ there is an alternating sequence of vertices and hyperedges that begins with $u$ and ends with $v$ such that successive elements are incident with each other. This assumption helps avoiding edge cases; our results easily extend to the non-connected case.

Our main result is a concrete upper bound on the number~\(\psi(m,r)\) of twins that might be necessary to obtain an \(r\)-outerplanar support.  Since superfluous twins can then be removed in linear time, this gives the following algorithmic result.

\begin{theorem}\label{thm:lintime}
  There is an algorithm solving \PS{} which, for constant~\(r\) and \(m\), has linear running time.
\end{theorem}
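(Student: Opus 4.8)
The plan is to solve \PS{} by a kernelization that exploits the promised bound~\(\psi(m,r)\) on the number of \emph{necessary twins}, followed by a brute-force search on the resulting bounded instance. Recall that being twins is an equivalence relation on the vertex set, and that each equivalence class is determined by its \emph{signature}, the set of hyperedges containing its members; since there are only \(m\) hyperedges, there are at most \(2^m\) twin classes. I would first compute this partition in linear time: reading the incidence structure once yields, for each vertex, its \(m\)-bit signature, after which the vertices can be grouped by signature using radix (bucket) sort. For constant~\(m\) this costs \(\Oh(n + m + \sum_{F \in \HE}|F|)\), i.e., time linear in the input size.

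The core reduction rule then discards, in every twin class containing more than~\(\psi(m,r)\) vertices, all but~\(\psi(m,r)\) of them. Its correctness rests on two implications. The forward direction -- if \(\Hyp\) admits an \loutp\ support then so does the reduced hypergraph -- is exactly the structural main result assumed above: retaining \(\psi(m,r)\) representatives per class never destroys the existence of an \loutp\ support. For the backward direction I would argue that a deleted twin~\(v\) of a retained vertex~\(u\) can always be reinstated without raising the outerplanarity number: in a planar embedding of an \loutp\ support of the reduced hypergraph, place~\(v\) in a small bubble immediately next to~\(u\) and join it to~\(u\) by a single edge. Since \(v\) lies in precisely the same hyperedges as~\(u\), attaching it only to~\(u\) keeps every induced subgraph \(G[F]\) connected, so the result is again a support; and because \(v\) sits in the same face region as~\(u\), it is peeled off in the same layer-removal step as~\(u\), so the outerplanarity number is unchanged. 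As each class retains at least one vertex, the reduction also preserves all signatures and leaves the hypergraph connected.

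After exhaustively applying the rule, the reduced hypergraph~\(\Hyp'\) has at most \(2^m \cdot \psi(m,r)\) vertices and still exactly~\(m\) hyperedges, so its total size is bounded by a function of~\(m\) and~\(r\) alone. On \(\Hyp'\) I would decide the problem by brute force: enumerate all graphs on the vertex set of~\(\Hyp'\), and for each test in polynomial time whether it is a support (every hyperedge induces a connected subgraph) and whether it is \loutp{} (check a planar embedding and iterate the outer-face removal at most \(r\) times). Both the number of candidates and the cost per candidate depend only on~\(m\) and~\(r\), so this phase runs in time \(g(m,r)\) for a computable function~\(g\). Together with the linear-time computation of twin classes and application of the rule, the whole algorithm runs in time \(\Oh(f(m,r)\cdot \text{(input size)} + g(m,r))\), which is linear for constant~\(m\) and~\(r\).

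The main obstacle is the structural bound~\(\psi(m,r)\) itself, which drives the forward direction and is the technical heart of the paper rather than of this corollary. Within the argument above, the step most in need of care is the backward direction: one must verify that the local ``bubble'' reinsertion genuinely keeps the drawing planar and does not push the reinserted vertex into a deeper layer. Everything else is routine, provided that \loutp ity of a fixed finite graph can be tested in finite time, which it can.
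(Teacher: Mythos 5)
Your algorithm is the same as the paper's: compute the twin classes in linear time, truncate every class to at most $\psi(m,r)$ vertices (this is exactly the paper's Rule~\ref{ps:rr} applied exhaustively), and then brute-force the reduced instance, whose size is bounded by a function of $m$ and~$r$ alone. Treating the bound $\psi(m,r)$ as a black box is also what the paper's own proof of the theorem does, and your running-time accounting is sound.

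The genuine gap is in your backward direction. You assert that the reinserted twin~$v$, drawn ``in a small bubble immediately next to~$u$'', is ``peeled off in the same layer-removal step as~$u$'' because it ``sits in the same face region as~$u$''. That inference is false: a vertex $u \in L_i$ reaches the outer face at step~$i$ only through \emph{some} of its incident faces, and if the bubble is placed in an incident face that has not merged into the outer face after $i-1$ peeling rounds, then $v$ survives round~$i$ and lands in layer~$i+1$. The simplest instance: a triangle is outerplanar with all three vertices in~$L_1$, yet a pendant vertex drawn inside the triangle's bounded face lies in~$L_2$ of that embedding. When $u \in L_r$, the same phenomenon yields an embedding with $r+1$ layers, so your construction does not witness that the extended graph is \loutp. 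The repair is precisely the content of the paper's Lemma~\ref{lem:representative-hypergraph}: the face for the bubble must be \emph{chosen}, namely the outer face when $u \in L_1$ and, when $u \in L_i$ with $i > 1$, a face incident with both~$u$ and some vertex of~$L_{i-1}$ (such a face exists by the definition of~$L_i$, and it merges into the outer face once $L_{i-1}$ is removed, so $v$ lands in layer exactly~$i$). Note that this attachment lemma is load-bearing in the paper for \emph{both} directions of the rule's correctness, not only the backward one, so it cannot be waved through. You did flag this step as the one most in need of care, but the justification you give for it does not hold without the face-selection argument. (The same embedding-dependence, incidentally, means that in the brute-force phase one cannot decide \loutp{ity} by peeling a single arbitrary planar embedding; one must range over embeddings, which is still constant time on a constant-size graph.)
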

\noindent
To put \cref{thm:lintime} into perspective, \PS{} remains \cNP-complete for $r = \infty$~\cite{AS87,JP87} and even for every fixed $r > 1$ \cite{BKMSV11} (see below).
The constants in the running time of the algorithm in \cref{thm:lintime} have a large dependence on~$m$ and~$r$.
However, as also the discussion about twins above shows, the number of hyperedges is a natural parameter whose influence on the complexity is interesting to know.
Furthermore, it is conceivable that the parameters~$m$ and~$r$ are small in practical instances: for a large number~$m$ of hyperedges, it is plausible that we obtain only hardly legible drawings unless the hyperedges adhere to some special structure, whereas every hypergraph with at most eight hyperedges has a planar support \cite{FLR97,Pil86}.
Thus, it makes sense to design algorithms particularly for hypergraphs with a small number of hyperedges, as done by~\textcite{VV04,HKvK+18}.
Moreover, %
a small outerplanarity number~$r$ leads to few layers in the drawing which may lead to aesthetically pleasing drawings, similarly to path- or cycle-supports~\cite{BKMSV11}.

\paragraph{Related work.} For specifics on the relations of different planar hypergraph embeddings, see \citet{BCPS11},~\citet{FLR97}, and Kaufmann, Kreveld, and Speckmann~\cite{KKS08}.

\citet{AS87,JP87} showed that finding a planar support is \cNP-complete. \citet{BKMSV11} proved that \pPS\ is \cNP-complete for~$r = 2, 3$. {F}rom their proof
it follows that \pPS\ is also \cNP-complete for every $r > 3$. This is due to a property of the reduction that \citeauthor{BKMSV11} use: Given a formula~$\phi$ in 3CNF, they construct a hypergraph~$\Hyp$ that has a \emph{planar} support if and only if $\phi$ is satisfiable. Due to the way in which $\Hyp$ is constructed, if there is any planar support, then it is \outp{3}. Thus, deciding whether
there is an \loutp\ support for $r \geq 3$
also decides the satisfiability of the corresponding formula.

\looseness=-1
Towards determining the computational complexity of finding an outerplanar hypergraph support, \citet{BCPS11} gave a polynomial-time algorithm for cactus supports (graphs in which each edge is contained in at most one cycle). They also showed that finding an outerplanar support (or planar support) can be done in polynomial time if, in the input hypergraph, each intersection or difference of two hyperedges is empty, a singleton, or again a hyperedge in the hypergraph.
A tree support can even be found in linear time~\cite{BFMY83,TY84},
a tree support of minimum diameter can be found in polynomial time~\cite{Mel97},
and
one can deal with an additional upper bound on the vertex degrees in the tree support in polynomial time~\cite {BKMSV11}.
\citet{KMN14}
studied so-called area-proportional Euler diagrams, for which the corresponding computational problem reduces to finding a minimum-weight tree support. Such supports can also be found in polynomial time~\cite{KMN14,KS03}.
Furthermore, if there are only two hyperedges and the positions of the vertices in the embedding are specified, then checking for a planar support can be done in polynomial time~\cite{HKvK+18} 

\looseness=-1
In a wider scope, motivated by drawing metro maps and metro map-like diagrams, \citet{BCPS12} studied the problem of finding \emph{path-based} planar hypergraph supports (these are planar supports that fulfill the additional constraint that the subgraph induced by each hyperedge contains a Hamiltonian path) giving \cNP-hardness and tractability results. %
Path-based tree supports can also be found in polynomial time~\cite{SW94}.

A concept related to subdivision drawings is (overlapping) clustered planarity \cite{DGL07,AC17}.
Very roughly, a graph~$G$ together with a hypergraph~$\Hyp$ on the same vertex set is \emph{overlapping clustered planar} if $G$ and $\Hyp$ admit a joint embedding in the plane which is edge-crossing-free for~$G$, a subdivision drawing for~$\Hyp$, and no edge of~$G$ crosses twice the boundary of a region corresponding to a hyperedge in~$\Hyp$.
Overlapping clustered planarity is a generalization of clustered planarity;
In the latter,
one assumes that each pair of hyperedges in the hypergraph~$\Hyp$ is either disjoint or one hyperedge is a subset of the other~\cite{FCE95}.
Clustered planarity has attracted a lot of research interest, see Da~Lozzo et al.~\cite{LEGG19} for a recent overview of the literature.
Only recently, a polynomial-time algorithm for testing clustered planarity has been found~\cite{FT19}.

\citet{VF84} suggested a data reduction rule for finding planar supports
that removes more than just twins:
it keeps only one vertex out of
each nonempty inclusion-minimal intersection
of any number of hyperedges.
They prove that the input hypergraph has a planar support
if the reduced hypergraph has,
but that the reverse direction does generally not hold
(\citet{FLR97} show an example).

\citet{CKNSSW15} showed that for obtaining minimum-edge supports (not
necessarily planar), twins show a similar behavior as for \loutp\
supports: Removing a twin can increase the minimum number of edges
needed for a support and finding a minimum-edge support is linear-time
solvable for a constant number of hyperedges via removing superfluous
twins.

\paragraph{Organization.}
In \cref{sec:prelim} we provide some technical preliminaries used
throughout the work.
In \cref{sec:twins-planar} we
give an example that shows that twins can be crucial for a hypergraph
to have a planar support.
As mentioned, for each~$m \in \mathbb{N}$, there is a number
$\psi(m)$ such that in each hypergraph with a planar support we can
safely forget all but $\psi(m)$ twins (see \cref{sec:nonuni-fpt}).
In \cref{sec:application} we
give a concrete upper bound for $\psi(m)$ in the case of \loutp\ supports and
derive the linear-time algorithm for \PS{} claimed in \cref{thm:lintime}. We base the proof on a construction of a special sequence of nested separators in $r$-outerplanar graphs which is given in \cref{sec:sepseq}. We conclude and give some directions for future research in \cref{sec:concl}.

\section{Preliminaries}\label{sec:sparse-planar-prelim}\label{sec:prelim}

By $A \uplus B$ we denote the union of two disjoint sets~$A$ and~$B$. For a family of
sets~$\fami{F}$, we write $\bigcup\fami{F}$ in place of $\bigcup_{S
  \in \fami{F}}S$. For equivalence relations~$\rho$ over some set~$S$
  and $v\in S$,
  we use~$[v]_\rho$ to denote the equivalence class of~$v$ in~$\rho$.

\paragraph{Hypergraphs.}
A \emph{hypergraph} $\Hyp$ is a pair~$(\Ver,\Ed)$ consisting of a \emph{vertex set} $\Ver$, also denoted $V(\Hyp)$, and a \emph{hyperedge set}~$\Ed$, also denoted~$\Ed(\Hyp)$. The hyperedge set $\Ed$ is a family of subsets of $\Ver$, that is, $F \subseteq \Ver$ for every hyperedge~$F \in \Ed$. Where it is not ambiguous, we use $\nVer{}\coloneqq |\Ver|$ and $\nEd{}\coloneqq |\Ed|$. When specifying running times, we use $|\Hyp|$ to denote $|\Ver(\Hyp)| + \sum_{F \in \HE(\Hyp)}|F|$. The \emph{size}~$|F$| of a hyperedge~$F$ is the number of vertices in it. Unless stated otherwise, we assume that hypergraphs do not contain hyperedges of size at most one or multiple copies of the same hyperedge. (These do not play any role for the problem under consideration, and removing them can be done easily and efficiently.)

A vertex~$v \in V$ and a hyperedge~$F \in \HE$ are \emph{incident} with one another if~$v \in F$. For a vertex~$v \in \Ver(\Hyp)$, let $\Ed_{\Hyp}(v)\coloneqq \{F \in {\cal H} \mid v \in F\}$. If it is not ambiguous, then we omit the subscript~$\Hyp$ from $\Ed_{\Hyp}$. A~vertex~$v$ \emph{covers} a vertex~$u$ if $\Ed(u) \subseteq \Ed(v)$. Two vertices $u, v \in \Ver$ are \emph{twins} if $\Ed(v)=\Ed(u)$. Clearly, the relation~$\twnrl$ on~$\Ver$ defined by $\forall u, v \in \Ver \colon (u, v) \in \twnrl \iff\Ed(u)=\Ed(v)$ is an equivalence relation.
The equivalence classes $[u]_{\twnrl}$ for $u \in \Ver$ are called \emph{twin classes}.

\emph{Removing a vertex subset}~$S \subseteq \Ver(\Hyp)$ from a hypergraph $\Hyp = (\Ver,\Ed)$ results in the hypergraph~$\Hyp-S\coloneqq (\Ver\setminus S, \Ed')$, where~$\Ed'$ is obtained from $\{F\setminus S\mid F\in \Ed\}$ by removing empty and singleton sets. For brevity, we also write $\Hyp-v$ instead of $\Hyp-\{v\}$. The hypergraph $\Hyp$ \emph{shrunken to} $\HV' \subseteq \HV$ is the hypergraph~$\Hyp|_{\HV'} \coloneqq \Hyp-(\Ver\setminus \HV')$.

\paragraph{Graphs.} %
Our notation related to graphs is basically standard and heavily borrows from \citeauthor{Die16}'s book~\cite{Die16}. In particular, a \emph{bridge} of a graph is an edge whose removal increases the graph's number of connected components. Analogously, a \emph{cut-vertex} is a vertex whose removal increases the graph's number of connected components. Some special notation including the \emph{gluing} of graphs is given below. %

\paragraph{Boundaried graphs and gluing.}
For a nonnegative integer~$b \in \mathbb{N}$, a \emph{\bdrd{b}\ graph} is a triple~$(G, B, \beta)$, where $G$ is a graph, $B \subseteq V(G)$ such that $|B| = b$, and $\beta \colon B \to \ival{1}{b}$ is a bijection. Vertex subset~$B$ is called the \emph{boundary} and $\beta$ the \emph{boundary labeling}. For ease of notation we also refer to~$(G, B, \beta)$ as the \emph{\bdrd{b}} graph~$G$ with boundary $B$ and boundary labeling~$\beta$. For brevity, a \bdrd{b}\ graph~$G$ whose boundary is the domain of~$\beta$ and whose boundary labeling is~$\beta$ is also called \emph{\bdrd{\beta}}.

For a nonnegative integer~$b$, the \emph{gluing} operation $\glue_b$ maps two \bdrd{b} graphs to an ordinary graph as follows: Given two \bdrd{b}\ graphs $G_1, G_2$ with corresponding boundaries~$B_1, B_2$ and boundary labelings~$\beta_1, \beta_2$, to obtain the graph $G_1 \glue_b G_2$ take the disjoint union of $G_1$ and $G_2$, and identify each pair~$v \in B_1$ and $w\in B_2$ of vertices such that $\beta_1(v)=\beta_2(w)$. We omit the index $b$ in~$\glue_b$ if it is clear from the context.

\paragraph{Topology.}
A \emph{topological space} is a pair~$\geom{X} = (X, \fami{F})$ of a set $X$, called \emph{universe}, and a collection~$\fami{F}$ of subsets of $X$, called \emph{topology}, that satisfy
the following properties:
\begin{compactitem}
\item The empty set~$\emptyset$ and $X$ are in $\fami{F}$.
\item The union of the sets of any subcollection
  of $\fami{F}$ is in $\fami{F}$.
\item The intersection of the sets of any finite subcollection
  of $\fami{F}$ is in $\fami{F}$.
\end{compactitem}
\newcommand{\pln}{\ensuremath{\geom{R}^2}}
\newcommand{\sph}{\ensuremath{\geom{S}}}
Each set in $\fami{F}$ is called \emph{open}. A \emph{closed set} is the complement of an open set. (The empty set and $X$ are both open and closed.)

We consider %
the topological space $\geom{R}^\ell = (\univ{R}^\ell, \fami{F})$,
where $\fami{F}$ is the standard topology of $\mathbb{R}^\ell$, that is, $\fami{F}$ is the closure under union and finite intersection of the set containing the open ball~$\{\vec{x} \in \mathbb{R}^\ell \mid \lVert\vec{x} - \vec{y}\rVert < d\}$ for each $d \in \mathbb{R}$, $\vec{y} \in \mathbb{R}^\ell$,
where 
$\lVert\cdot\rVert$ is the Euclidean norm.

A \emph{topological subspace}~$\geom{Y} \subseteq \geom{X}$ of a topological space~$\geom{X}$ is a topological space whose universe is a subset of the universe of $\geom{X}$. We always assume topological subspaces to carry the \emph{subspace topology}, that is, the open sets of $\geom{Y}$ are the intersections of the open sets of $\geom{X}$ with the universe of $\geom{Y}$. We also say that $\geom{Y}$ is the topological subspace \emph{induced} by the universe of $\geom{Y}$.

Important topological subspaces of $\geom{R}^\ell$ are, with a slight abuse of notation,
\begin{compactitem}
\item the \emph{plane}~$\geom{R}^2$,
\item the \emph{sphere}, whose universe is $\{(x, y, z) \in \mathbb{R}^3 \mid x^2 + y^2 + z^2 = 1\}$,
\item the \emph{closed disk}, whose universe is $\{(x, y) \in \mathbb{R}^2 \mid
  x^2 + y^2 \leq 1\}$,
\item the \emph{open disk}, whose universe is $\{(x,
  y) \in \mathbb{R}^2 \mid x^2 + y^2 < 1\}$, and
\item the \emph{circle},
  whose universe is $\{(x, y) \in \mathbb{R}^2 \mid x^2 + y^2 = 1\}$.
\end{compactitem}
A \emph{homeomorphism}~$\phi$ between two topological spaces is a bijection $\phi$ between the two corresponding universes such that both $\phi$ and $\phi^{-1}$ are continuous. We often refer to a \emph{subspace $\geom{X}$ in a topological space~$\geom{Y}$} (for example, a circle on a sphere), by which we mean a topological subspace of $\geom{Y}$ which is homeomorphic to~$\geom{X}$.

An \emph{arc} is a topological space that is homeomorphic to the closed interval $[0,1] \subseteq \geom{R}^1$. The images of 0 and 1 under a corresponding homeomorphism are the \emph{endpoints} of the arc, which \emph{links} them and runs \emph{between} them. Let $\geom{X} = (X, \fami{F})$ %
be a topological space. Being linked by an arc in $\geom{X}$ defines an equivalence relation on~$X$. The topological subspaces induced by the equivalence classes of this relation are called \emph{regions}. We say that a closed set $C$ in a topological space~$\geom{S}$ \emph{separates} $\geom{S}$ into the regions of the subspace of $\geom{S}$ induced by $S \setminus C$, where $S$ is the universe of $\geom{S}$.

For more on topology, see \citet{Mun00}, for example.

\paragraph{Embeddings of graphs into the plane and sphere.}
An \emph{embedding} of a graph~$G = (V, E)$ into the plane~\pln\ (into the sphere~\sph) is a tuple $(\geom{V}, \fami{E})$ and a bijection $\phi \colon V \to \geom{V}$ such that
\begin{compactitem}
\item $\geom{V} \subseteq \pln$ ($\geom{V} \subseteq \sph$),
\item $\fami{E}$ is a set of arcs in \pln\ (in \sph) with endpoints in $\geom{V}$,
\item the interior of any arc in $\fami{E}$ (that is, the arc without its endpoints) contains no point in $\geom{V}$ and no point of any other arc in $\fami{E}$, and
\item $u, v \in V$ are adjacent in $G$ if and only if $\phi(u)$ is linked to $\phi(v)$ by an arc in~$\fami{E}$.
\end{compactitem}
The regions in $\pln \setminus (\bigcup \fami{E})$ (in $\sph \setminus (\bigcup \fami{E})$) are called \emph{faces}.

A \emph{planar graph} is a graph which has an embedding in the plane or, equivalently, in the sphere.
A \emph{minor} of a graph $G$ is a graph obtained from a subgraph of~$G$ by contracting edges, that is, replacing the two endpoints of an edge~$\{u, v\}$ by a new vertex which is adjacent to all vertices in~$N(u) \cup N(v) \setminus \{u, v\}$.
It follows from Kuratowski's theorem that a graph is planar if and only if it does not have a $K_5$ or a $K_{3, 3}$ as a minor~\cite[Section 4.4]{Die16}.
A~\emph{plane graph}~$G = (V, E)$ is a planar graph together with a fixed embedding in the plane.
An \emph{\splane\ graph}~$G$ is a planar graph given with a fixed embedding in the sphere. For notational convenience, we refer to the sets~$V$ and $\geom{V}$ as well as~$E$ and $\fami{E}$ interchangeably. Moreover, we sometimes identify $G$ with the set of points $\geom{V} \cup \bigcup \fami{E}$.

A \emph{noose in an \splane\ graph~$G$} is a circle in~$\geom{S}$ whose intersection with~$G$ is contained in $V(G)$. Note that every noose separates $\geom{S}$ into two open disks.

\paragraph{Layer decompositions, outerplanar graphs, face paths.}
The face of unbounded size in the embedding of a plane graph~$G$ is called \emph{outer face}. The \emph{layer decomposition} of~$G$ with respect to the embedding is a partition of~$V$ into layers~$L_1\uplus\dots\uplus L_r$ and is defined inductively as follows. Layer~$L_1$ is the set of vertices that lie on the outer face of~$G$.
For each $i \in \{2, \ldots, r\}$, layer~$L_i$ is the set of vertices that lie on the outer face of~$G - (\bigcup_{j=1}^{i-1} L_j)$. The graph $G$ is called \emph{$r$-out\-er\-pla\-nar} if it has an embedding with a layer decomposition consisting of at most $r$~layers. The \emph{outerplanarity number} of $G$ is the minimum~$r$ such that $G$~is \loutp. If $r=1$, then $G$~is said to be \emph{outerplanar}. A \emph{face path} is an alternating sequence of faces and vertices such that two consecutive elements are incident with one another. The first and last element of a face path are called its \emph{ends}. Note that the ends of a face path may be two vertices, two faces, or a face and a vertex. The \emph{length} of a face path is the number of faces in the sequence. Note that a vertex~$v$ in layer~$L_i$ has a face path of length~$i$ from~$v$ to the outer face. Moreover, a graph is \loutp\ if and only if each vertex has a face path of length at most~$r$ to the outer face.

\paragraph{Branch decompositions.}
A \emph{branch decomposition of a graph~$G$} is a pair~$(T, \lambda)$,
where $T$ is a ternary tree, that is, each internal vertex has degree three,
and $\lambda$ is a bijection between the leaves of~$T$ and~$E(G)$. Every edge $e \in E(T)$ defines a bipartition of $E(G)$ into $A_e, B_e$ corresponding to the leaves in the connected components of~$T - e$. Define the \emph{middle set~$M(e)$} of an edge~$e \in E(T)$ to be the set of vertices in~$G$ that are incident with both an edge in $A_e$ and $B_e$. That is, \[M(e) \mcoloneqq \{v \in V(G) \mid \exists a \in A_e \exists b \in B_e \colon v \in a \cap b\}.\]
The \emph{width of an edge $e \in E(T)$} is $|M(e)|$ and the \emph{width of a branch decomposition~$(T, \lambda)$} is the largest width of an edge in~$T$. The \emph{branchwidth of a graph~$G$} is the smallest width of a branch decomposition of~$G$.

A \emph{\scbd\ of an \splane\ graph~$G$} is a branch decomposition~$(T, \lambda)$ of~$G$ fulfilling the following additional condition. For each edge~$e \in E(T)$, there is a noose~$\geom{N}_e$ whose intersection with $G$ is precisely $M(e)$ and, furthermore, the open disks~$\geom{D}_1, \geom{D}_2$ into which the noose~$\geom{N}_e$ separates~$\geom{S}$ satisfy $\geom{D}_1 \cap G = A_e \setminus M(e)$ and $\geom{D}_2 \cap G = B_e \setminus M(e)$.
We use the following~theorem.
\begin{theorem}[\cite{ST94,DorPBF10,MP15}]\label{scbd}
  Let $G$ be a connected, bridgeless, \splane\ graph of branchwidth at most~$b$. There exists a \scbd\ for $G$ of width at most $b$.%
\end{theorem}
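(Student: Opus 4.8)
This is a classical result, so the plan is to follow the framework of \citet{ST94} rather than to reprove planar branchwidth duality from scratch. The central idea is to transport the problem from $G$ to its \emph{medial graph}~$\geom{M}$: the $4$-regular \splane\ graph obtained by placing one vertex on each edge of~$G$ and joining two such vertices whenever their edges are consecutive around a common face of~$G$. Branch decompositions of~$G$ correspond to \emph{carving decompositions} of~$\geom{M}$, that is, recursive bipartitions of $V(\geom{M}) = E(G)$, and the width of a decomposition-tree edge matches, up to the correspondence (and a factor of two), the order of the associated cut of~$\geom{M}$. The decisive geometric fact is that in a plane graph every cut of a carving decomposition can be realized by a closed curve, and such a curve meets~$G$ only in vertices precisely when it is routed through those faces of~$\geom{M}$ that correspond to vertices of~$G$ rather than to faces of~$G$; in that case it is exactly a noose in~$G$. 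Hence the entire statement reduces to showing that an optimal carving of~$\geom{M}$ can be chosen so that all of its cutting curves pass only through vertex-faces of~$\geom{M}$.

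To obtain such a vertex-respecting optimal carving I would invoke the min--max machinery of \citet{ST94}: the branchwidth of a plane graph is characterized by the absence of a tangle of larger order, and tangles in plane graphs admit noose-based representatives. Starting from an optimal branch decomposition of~$G$ (of width equal to the branchwidth, hence at most~$b$), one processes the decomposition tree and replaces each abstract separation by a noose-separation of the same or smaller order whose existence the duality guarantees. The connectivity hypothesis on~$G$ ensures that each side of a cut hangs together, so that the boundary between the two sides is a single circle rather than a disjoint union of several; bridgelessness ensures that this circle can be pushed off every edge and made to meet~$G$ only at the~$M(e)$ vertices. Assembling the resulting nooses~$\geom{N}_e$ over all tree edges~$e$ yields a \scbd\ whose width does not exceed~$b$, with $A_e \setminus M(e)$ and $B_e \setminus M(e)$ lying in the two open disks into which~$\geom{N}_e$ separates~$\geom{S}$, as required.

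The main obstacle is exactly this geometric realizability step. An arbitrary optimal separation of~$E(G)$ need not be planar-realizable, since the two sides may be interleaved around the sphere and a side may be disconnected; without control over this one is forced through faces of~$\geom{M}$ corresponding to faces of~$G$, producing cutting curves that cross edges and so fail to be nooses. This is where the connected, bridgeless hypotheses are indispensable, and normalizing an optimal branch (equivalently carving) decomposition into a genuinely noose-respecting one while certifying that the width is not increased is the lengthy technical heart of the argument. I would therefore cite \citet{DorPBF10} and its refinement by \citet{MP15} for this bookkeeping rather than redo it, since it is essentially orthogonal to the combinatorial contribution of the present paper.
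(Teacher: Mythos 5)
The paper itself gives no proof of Theorem~\ref{scbd}: it imports the result as a black box from \citet{ST94}, \citet{DorPBF10}, and \citet{MP15}, adding only the historical remark that \citeauthor{DorPBF10}'s no-degree-one-vertices hypothesis had to be corrected to bridgelessness by \citeauthor{MP15}. Your proposal is consistent with this and takes essentially the same route—an accurate sketch of the medial-graph/carving-decomposition machinery of Seymour and Thomas, with the technical noose-realizability step deferred to exactly the same references the paper cites.
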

\noindent\citet{DorPBF10} first noted that \citet{ST94} implicitly proved a variant of \cref{scbd} in which $G$ is required to have no degree-one vertices rather than no bridges. \citet{MP15} observed a flaw in \citeauthor{DorPBF10}'s derivation, showing that bridgelessness is required (and sufficient). The \scbd\ in \cref{scbd} can be computed in $\Oh(|V(G)|^3)$~time (see \citet{GT08}), but we do not need to explicitly construct it.%

\paragraph{Parameterized algorithms.}
Let $\Sigma$ be an alphabet.
A \emph{parameter} is a mapping $\Sigma^* \to \univ{N}$.
For a string $q$, $\kappa(q)$ is the \emph{parameter value}.
A \emph{parameterized problem} is a tuple $(Q, \kappa)$ of a language~$Q$ over some alphabet~$\Sigma$ and a parameter $\kappa$.
We say that an algorithm is a \emph{fixed-parameter algorithm} with respect to a parameter $\kappa$ if the algorithm has running time $\phi(\kappa(q)) \cdot \poly(|q|)$ where~$q$ is the input and $\phi$~is some computable function.

To simplify notation, we omit explit reference to the function $\kappa$.
For example, if we have a problem of finding a solution of size~$k$, then we write ``$k$'' for the solution size parameter, that is, the mapping that takes an instance and extracts the value of~$k$.
Moreover, when specifying running times with respect to a parameter~$\kappa$, we often replace~$\kappa(q)$ by the referenced value if the instance~$q$ is clear from the context.

A core tool in the development of fixed-parameter algorithms is polynomial-time preprocessing by \emph{data reduction}~\cite{GN07SIGACT,Bod09,Kra14}.
The goal is to remove needless information from the input so to reduce its size or to obtain some desirable properties of the input.
Such small or well-formed instances can then be exploited by algorithms that produce a solution: small size of the input implies a small search space of the solution algorithm, and similarly, well-formed instances may be easier to solve.

Data reduction is usually presented as a series of \emph{reduction
  rules}.
These are poly\-nomial-time algorithms that take as input an
instance of some decision problem and produce another instance of the
same problem as output.
A reduction rule is \emph{correct} if for each
input instance~$I$, the corresponding output instance of the rule is a
\yesin\ if and only if~$I$ is a \yesin.
We call an
instance~$I$ of a parameterized problem \emph{reduced} with respect to
a reduction rule if the reduction rule does not apply to~$I$.
That is,
carrying out that reduction rule yields an unchanged instance.

The notion of problem kernels captures the idea of reduction rules with effectiveness guarantee.
A \emph{kernelization} or \emph{problem kernel} for a parameterized problem $(Q, \kappa)$ is a parameterized reduction~$\rho$ from $(Q, \kappa)$ to itself such that $\rho$ is computable in polynomial time and there is a function $\phi$ such that for every $q \in \Sigma^*$ we have $|\rho(q)| \leq \phi(\kappa(q))$.
We also call $\phi$ the \emph{size} of $\rho$.
If $\phi$ is polynomial, then we also call $\rho$ a \emph{polynomial kernelization} or \emph{polynomial problem~kernel}.

\section{Beware of removing twins}\label{sec:twins-planar}

\newcommand{\counterexcommon}{
  \node (h1) at (0,0) [vert,label={[yshift=1mm]right:$h_1$}] {};
  \node (b) at (0,-1) [vert,label=left:$b$] {};
  \node (y) at (0,-2) [vert,label=left:$y$] {};
  \node (h2) at (0,-3) [vert,label={[yshift=-1mm]right:$h_2$}] {};

  \node (x) at (-1,-1) [vert,label=left:$x$] {};
  \node (a) at (-1,-2) [vert,label=left:$a$] {};

  \node (z) at (1,-1) [vert,label=right:$z$] {};
  \node (c) at (1,-2) [vert,label=right:$c$] {};

  \draw [-,thick] (h1)-- (b) -- (y) -- (h2);

  \draw [-,thick] (h1) -- (x) -- (a) -- (h2);

  \draw [-,thick] (h1) -- (z) -- (c) -- (h2);

  \node (t1) at (-0.61, -1.5) [vert,label=above:$t$] {};
}

\begin{figure}[t]\centering
  \hfill
  \begin{tikzpicture}[>=stealth,x=55pt,y=35pt,
    ivert/.style={vert},%
    tvert/.style={vert}]%

    \counterexcommon{}
    \node (t2) at (0.61, -1.5) [vert,label=above:$t'$] {};

    \draw [dotted,thick] (t1) -- (a);
    \draw [dotted,thick] (t1) -- (x);

    \draw [dotted,thick] (t1) -- (b);
    \draw [dotted,thick] (t1) -- (y);

    \draw [dotted,thick] (t2) -- (b);
    \draw [dotted,thick] (t2) -- (y);

    \draw [dotted,thick] (t2) -- (c);
    \draw [dotted,thick] (t2) -- (z);
  \end{tikzpicture}%
  \hfill%
  \begin{tikzpicture}[>=stealth,x=55pt,y=35pt,
    ivert/.style={vert},%
    tvert/.style={vert}]%

    \tikzstyle{heda} = [-, fill = green!80, fill opacity = 0.3, thick]
    \tikzstyle{hedb} = [dashed, fill = magenta!80, fill opacity = 0.3, thick]

    \input{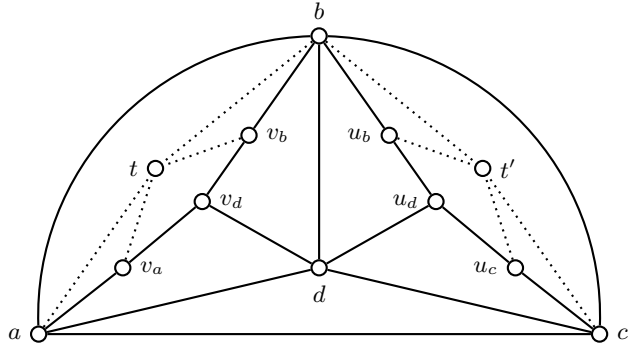}

    \counterexcommon{}

    \draw [-,thick] (b) -- (c);

    \draw [-,thick] (y) -- (z);

    \begin{pgfonlayer}{background}
      \draw [-,thick, heda] \hedgei{z}{2mm};
      \draw [-,thick, heda] \hedgei{b}{2mm};
      \draw [-,thick, heda] \hedgeii{a}{h2}{2mm};

      \draw [-,thick, hedb] \hedgei{y}{2mm};
      \draw [-,thick, hedb] \hedgei{c}{2mm};
      \draw [-,thick, hedb] \hedgeii{x}{h1}{2mm};
    \end{pgfonlayer}
  \end{tikzpicture}%
  \hfill\mbox{}%
  \caption{Left: A hypergraph~$\Hyp$ and its support, showing that twins can be essential for obtaining a $2$-outerplanar support. The set of hyperedges consists of size-two hyperedges that are drawn as solid lines between the corresponding vertices and, additionally, $\{a,b,t,t'\}$, $\{b,c,t,t'\}$, $\{x,y,t,t'\}$, and $\{y,z,t,t'\}$. Note that the vertices~$t$ and $t'$ are twins. The hypergraph~$\Hyp$~has a (2-outer)planar support whose edges are indicated by the solid and the dotted lines. However, $\Hyp-t'$ does not have a planar support, as witnessed on the right: A $K_{3, 3}$-minor of a support of $\Hyp - t'$ obtained in the proof of \cref{thm:counterexample}. One partite vertex set of the minor is encircled with dashed lines and colored in red, the other partite vertex set is encircled solidly and colored in green.
  }
\label[figure]{fig:counter}
\end{figure}

In \cref{fig:counter}, we provide a concrete example that shows that twins can be necessary to obtain a 2-outer-planar support. More precisely, the hypergraph on the left in \cref{fig:counter} witnesses the following theorem. Recall that, for a hypergraph $\Hyp$ and a vertex $v$ in $\Hyp$, the hypergraph $\Hyp - v$ is obtained from $\Hyp$ by removing $v$ out of each hyperedge in $\Hyp$.

\begin{theorem}\label{thm:counterexample}
  There is a hypergraph $\Hyp$ with two twins~$t$ and $t'$ such that $\Hyp$ has a 2-outerplanar support, but the hypergraph~$\Hyp - t'$ does not have a planar support. Each hyperedge in $\Hyp$ has size at most four.
\end{theorem}
\noindent The hypergraph improves on the example given in the conference version of this article~\cite{BKK+17} by having fewer vertices, hyperedges, and smaller maximum hyperedge size.
\begin{proof}[Proof of \cref{thm:counterexample}]
  The hypergraph~$\Hyp$ has vertex set $\{a, b, c, x, y, z, t, t', h_1, h_2\}$.
  The hyperedge set consists of the size-two hyperedges shown as solid lines on the left in \cref{fig:counter} and, additionally, the four hyperedges $\{a,b,t,t'\}$, $\{b,c,t,t'\}$, $\{x,y,t,t'\}$, and $\{y,z,t,t'\}$.

  Consider the graph~$G$ shown on the left of \cref{fig:counter} that includes both the dashed and solid edges.
  Observe that each of the size-four hyperedges of $\Hyp$ induces a path in $G$ and thus $G$ is a support of~$\Hyp$.
  Moreover, $G$ is 2-outerplanar.

  It remains to prove that $\Hyp - t'$ does not have a planar support.
  Suppose, for the sake of contradiction, that there is a planar support~$G$.
  Due to the hyperedges of size two in $\Hyp - t'$, there are two cycles in $G$, namely $C_1$, defined as $(x, h_1, b, y, h_2, a)$ and $C_2$, defined as $(z, h_1, b, y, h_2, c)$.
  Consider a planar embedding of~$G$.
  Consider the circle subspaces $\geom{C}_1$ and $\geom{C}_2$ in the plane defined by the embedding of $C_1$ and $C_2$, respectively.
  Circles $\geom{C}_1$ and $\geom{C}_2$ separate the plane into three open regions.
  These three open regions define three smallest enclosing closed regions:
  \begin{compactitem}
  \item The region~$\geom{R}_1$, containing the vertices of $C_1$ and no vertices in $V(C_2) \setminus V(C_1)$, 
  \item region~$\geom{R}_2$, containing the vertices of $C_2$ and no vertices in $V(C_1) \setminus V(C_2)$, and
  \item region~$\geom{R}_3$, containing $h_1$, $h_2$, the vertices in the symmetric difference $(V(C_1) \setminus V(C_2)) \cup (V(C_2) \setminus V(C_1))$ of $C_1$ and $C_2$, and no vertex in $(V(C_1) \cap V(C_2)) \setminus \{h_1, h_2\}$.
  \end{compactitem}
  Vertex~$t$ is embedded in one of these three regions.
  Because of symmetry it suffices to obtain a contradiction in the case where $t$ is not embedded in region~$\geom{R}_2$, that is, the case where $t$ is not embedded in region~$\geom{R}_1$ is analogous.
  Assume thus that $t$ is not embedded in region~$\geom{R}_2$.
  Then, $t$ is adjacent in $G$ with at most one of $\{b, c\}$ and with at most one of $\{y, z\}$, by planarity of $G$ and the definition of the three regions.
  Moreover, $G[\{b, c, t\}]$ and $G[\{y, z, t\}]$ are connected, because~$\{b, c, t\}$ and $\{y, z, t\}$ are hyperedges of $\Hyp - t'$, and thus 
  $G$ contains the edges $\{b, c\}$ and $\{y, z\}$.
  Thus, $G$ contains the graph shown on the right in \cref{fig:counter} as a subgraph.
  By contracting $\{x, h_1\}$ and $\{a, h_2\}$ in $G$ we obtain a $K_{3, 3}$ as a minor, contradicting the fact that~$G$ is planar.
  Thus, indeed, $\Hyp - t'$ does not have a planar support.
\end{proof}

\noindent
\cref{thm:counterexample} shows that
removing
one vertex of a twin class can transform a \yesin\ of \pPS\ into a
\noin.
We next address two features of the construction used in \cref{thm:counterexample}.
First, we show that the counterexample requires hyperedges of size at least four.
Second, we show that counterexamples exist with twin classes of arbitrary size, rather than just with pairs of twins.

\paragraph{Hyperedge size at most three.}
We now show that removing twins is correct, as long as they are
contained only in hyperedges of size at most three. That is, the following data reduction rule for \PS\ is correct.
\begin{rrule}\label{rr:size-three}
  Let $\Hyp$ be a hypergraph and $u$ and $v$ be twins in $\Hyp$ such that each hyperedge containing $u$ and $v$ has size at most three.
  Then, remove $u$ from~$\Hyp$.
\end{rrule}

\begin{theorem}
  \cref{rr:size-three} is correct, that is, $\Hyp$ has an \loutp\ support if and only if $\Hyp - u$ has.%
\end{theorem}
\begin{proof}
  If $\Hyp - u$ has an $r$-outerplanar support~$G$, then $\Hyp$ has an $r$-outerplanar support~$G'$: add $u$ as a degree-one neighbor to $v$ in $G$ to obtain~$G'$.
  It remains to show that, if $\Hyp$ has an $r$-outerplanar support~$G$, then $\Hyp - u$ has an $r$-outerplanar support.

  First,
  assume that
  $\{u, v\} \in E(G)$.
  Then $\Hyp - u$ has an $r$-outerplanar support~$G'$:
  Obtain $G'$ from $G$ by contracting the edge $\{u, v\}$ onto $v$.
  Since contracting an edge cannot create new faces in the corresponding embedding,
  $G'$ is $r$-outerplanar.
  Clearly, each hyperedge induces a connected graph in~$G'$.

  Now assume that $\{u, v\}\notin E(G)$.
  Consider an arbitrary hyperedge $F$ containing $u$ and $v$ in~$\Hyp$ and observe that $F$ has size three.
  Let $w \in F \setminus \{u, v\}$.
  Since $u$ and $v$ are nonadjacent in~$G$ yet~$G[F]$~is connected, we have $\{u, w\}, \{v, w\} \in E(G)$.
  Thus, $G - u$ is an $r$-outerplanar support for~$\Hyp - u$, as required.
\end{proof}

\paragraph{Large twin classes.} To show that the example from
\cref{thm:counterexample} is not a pathology of having only one pair of twins, we now extend it so that an arbitrarily large set of twins is required for the existence of a planar support.
The basic observation is that in \cref{thm:counterexample} we indeed have proved that there are two disjoint regions in which the two twins have to reside, for otherwise there is no planar support.
By introducing several copies of the hypergraph constructed in \cref{thm:counterexample} and merging them in an appropriate way, we can ensure that there is an arbitrarily large number~$\ell$ of regions, of which each requires its own private vertex from a twin class of size~$\ell$.

\begin{theorem}\label{thm:manytwins}
  For each integer $\ell \geq 2$, there is a hypergraph $\Hyp$ with a set~$X$ of $2\ell$ mutual twins such that $\Hyp$ has a 2-outerplanar support, but for each $t \in X$ the hypergraph~$\Hyp - t$ does not have a planar support.
\end{theorem}

\begin{proof}
  Fix an integer $\ell \in \univ{N}$.
  To construct the hypergraph $\Hyp$, copy~$\ell$ times the vertex set~$V(\Hyp)$ from the proof of \cref{thm:counterexample}, and let~\(V_i \coloneqq \{a_i, b_i, c_i, x_i, y_i, z_i, h^1_i, h^2_i, t_i, t_i'\}\) denote the vertex set of the~$i$th copy, $i \in \ivals{\ell}$.
  The hyperedges are defined as follows.
  Within each copy, add the size-two solid hyperedges as shown on the left of \cref{fig:counter}.
  Then, add a distinguished new vertex~$v^*$, and add the size-two hyperedges~$\{x_i, v^*\}$, $\{a_i, v^*\}$, $\{z_i, v^*\}$, and $\{c_i, v^*\}$.
  Intuitively,
  vertex~$v^*$ serves to force each of the copies
  into an embedding with the same outer face
  and as a conduit to connect subgraphs induced by hyperedges that contain vertices from each of the copies, which we are about to introduce.

  Denote by $A, B, C, X, Y, Z$ the sets of copies of the corresponding vertices above, that is, $A:=\{a_i\mid i \in \ivals{\ell}\}$, $B:=\{b_i\mid i \in \ivals{\ell}\}$, and so forth. %
  Let $T:=\{t_i, t'_i \mid i \in \ivals{\ell}\}$ denote the set of designated twins.
  The final hyperedges in $\Hyp$ are
  \begin{align*}
    A\cup B \cup T \cup \{v^*\}, && B \cup C \cup T \cup \{v^*\},\\
    X \cup Y \cup T \cup \{v^*\}, & & \text{and } Y\cup Z \cup T \cup \{v^*\}.
  \end{align*}
  Note that $T$ forms a twin class in the resulting
  hypergraph~$\Hyp$. Hypergraph~$\Hyp$ has a \outp{2}\
  support
  because~$v^*$ can be used to join in a star-like fashion into one connected graph the partial
  supports for each $\Hyp[V_i]$ that are obtained by copying the support for the example on the left of \cref{fig:counter}.

  \looseness=-1
  We claim that, for each $t \in T$, hypergraph~$\Hyp - t$ does not have a planar support.
  Assume, for the sake of a contradiction, that there is a planar support~$G$.
  Choose a planar embedding of $G$ such that $v^*$ is incident with the outer face.
  Due to the hyperedges of size two in $\Hyp - t'$, there are $2\ell$ cycles in $G$, namely, for each $i \in [\ell]$, there is the cycle $C^1_i$, defined as $(x_i, h^1_i, b_i, y_i, h^2_i, a_i)$ and the cycle $C^2_i$, defined as $(z_i, h^1_i, b_i, y_i, h^2_i, c_i)$.
  The embeddings of these cycles define $2\ell + 1$ closed regions as follows.
  The embedding of the cycle $C^\alpha_i$, for $\alpha \in \{1, 2\}$ and $i \in [\ell]$, separates the plane into two open regions.
  Of these two regions, pick the one that has empty intersection with the outer face, take the smallest enclosing closed region and denote it by $\geom{R}_{2i - 2 + \alpha}$.
  This defines $2\ell$ regions $\geom{R}_{1}, \ldots, \geom{R}_{2\ell}$.
  Region $\geom{R}_0$ is defined by taking the plane, removing from it all points in the regions $\geom{R}_{1}, \ldots, \geom{R}_{2\ell}$, and then taking the smallest enclosing closed region.

  Since $v^*$ is incident with the outer face, and because of the edges $\{x_i, v^*\}$, $\{a_i, v^*\}$, $\{z_i, v^*\}$, and~$\{c_i, v^*\}$, vertex~$v^*$ is not contained in any region $\geom{R}_i$ such that $i \in [2\ell]$.
  It follows that the regions~$\geom{R}_i$ are pairwise disjoint except for their boundary cycles.
  Thus, the regions $\geom{R}_i$, $i \in \{0, \ldots, 2\ell\}$, satisfy the following properties:
  \begin{compactitem}
  \item The region $\geom{R}_0$ contains all vertices of each cycle $C^\alpha_i$, $\alpha \in \{1, 2\}$, $i \in [\ell]$, except for the vertices in $B \cup Y$.
  \item For each $i \in [\ell]$, region $\geom{R}_{2i - 1}$ contains all vertices of $C^1_i$, no vertex in $V(C^2_i) \setminus V(C^1_i)$, and no vertex of any other cycle $C^1_j$, $C^2_j$ for $j \in [2\ell] \setminus \{i\}$.
  \item For each $i \in [\ell]$, region $\geom{R}_{2i}$ contains all vertices of $C^2_i$, no vertex in $V(C^1_i) \setminus V(C^2_i)$, and no vertex of any other cycle $C^1_j$, $C^2_j$ for $j \in [2\ell] \setminus \{i\}$.
  \end{compactitem}
  Since there are only $2\ell - 1$ twins from $T$ present in $\Hyp - t$, one of the regions $R_i$, $i \in \{1, \ldots, 2\ell\}$ does not contain a vertex of~$T$, say $R_{i^\circ}$.
  By symmetry we may assume that $i^\circ$ is even.

  Consider the hyperedge $F \coloneqq B \cup C \cup (T \setminus \{t\}) \cup \{v^\star\}$
  and the two vertices $b_{i^\circ}$ and $c_{i^\circ}$ in $F$.
  Observe that the only region $\geom{R}_i$, $i \in \{0, \ldots, 2\ell\}$, containing $b_{i^\circ}$ and another vertex of $F$ is $\geom{R}_{i^\circ}$, and the only other vertex of $F$ in $\geom{R}_{i^\circ}$ is $c_{i^\circ}$.
  Thus, since $G[F]$ is connected, there is an edge $\{b_{i^\circ}, c_{i^\circ}\}$ in $G$.
  By an analogous argument for the hyperedge $Y \cup Z \cup (T \setminus \{t\}) \cup \{v^{\star}\}$
  we obtain that $\{y_{i^\circ}, z_{i^\circ}\} \in E(G)$.
  Thus, $G[V_{i^\circ}]$ contains a subgraph as shown on the right of \cref{fig:counter}.
  Thus, $G$ contains a $K_{3, 3}$-minor contradicting the fact that $G$ is planar.
  Hence, $\Hyp - t$ does not have a planar support, as claimed.
\end{proof}

\section{The existence of an upper bound on the number of important twins}\label{sec:nonuni-fpt}

In the proof of \cref{thm:manytwins},
the number of hyperedges increases with the number of necessary twins we seek to enforce. We now show that this is unavoidable. That is, with a fixed number of hyperedges, it is impossible to create arbitrarily large twin classes out of which no twin can be deleted without violating the property of having a planar support.

\begin{theorem}\label{nonuni-fpt}
  There exists a function $\psi \colon \mathbb{N} \to \mathbb{N}$ with
  the following property. For each $m \in \mathbb{N}$ and every
  hypergraph~$\Hyp$ that has at most $m$~hyperedges, out of each twin class
  of~$\Hyp$, we can remove all but $\psi(m)$ arbitrary twins such that
  the resulting hypergraph has a planar support if and only if $\Hyp$
  has one.
\end{theorem}

The basic observation is that adding twins is not detrimental. If we have a planar support for $\Hyp$, then we can make a new twin adjacent to one of its already present twins, so that the resulting graph remains planar. Reversing this idea, from each hypergraph with a planar support, by deleting twins we can obtain  a \emph{minimal} hypergraph~$\hyp'$ which also has a planar support but from which no further twins can be deleted while maintaining the property of having a planar support. Using Dickson's lemma (see below for details) it is not hard to show that there is a function $\phi$ such that, for each fixed number~$m$ of hyperedges, there are only $\phi(m)$ such minimal hypergraphs. %
Clearly, among these minimal hypergraphs, one has a largest twin class, whose size we can put as the value of~$\psi(|\HE(\Hyp)|)$.

We now formalize the above approach. Denote by $\univ{S}$ the set of hypergraphs which have a planar support. As mentioned, $\univ{S}$ is \emph{closed under adding twins}, that is, taking an arbitrary hypergraph in~$\univ{S}$ and adding a twin to it yields another hypergraph in~$\univ{S}$.

\begin{proof}[Proof of \cref{nonuni-fpt}]
  We first define a quasi-order~$\preceq$ on the family of hypergraphs with at most $\nEd{}$~hyperedges. (A \emph{quasi-order} is reflexive and transitive.) To define~$\preceq$, we say that $\Hyp \preceq \G$ if $\Hyp$~can be obtained from~$\G$ by iteratively removing a vertex that has a twin. If we allow zero removals so that $\preceq$~is reflexive, it is clear that $\preceq$~is a quasi-order. Moreover, if $\Hyp \in \univ{S}$ and $\Hyp \preceq \G$, then $\G \in \univ{S}$ since $\univ{S}$ is closed under adding twins.

  For every~$m \in \mathbb{N}$ let~$\univ{F}_m$ denote the family of hypergraphs in $\univ{S}$ that contain at most $\nEd{}$ hyperedges and are minimal under~$\preceq$.
  Next we show that $\univ{F}_m$ is finite.
  Consider the representation of a hypergraph~$\Hyp$ with at most $\nEd{}$ hyperedges as a $2^{\nEd{}}$-tuple~$t_\Hyp \in \univ{N}^{2^\nEd{}}$, each entry of which represents the size of a distinct twin class. The set of such tuples is quasi-ordered by a natural extension of~$\leq$, namely $(a_1, \ldots, a_\ell) \leq (b_1, \ldots, b_\ell)$ if and only if $a_i \leq b_i$ for each $i \in \{1, \ldots, \ell\}$. We now lead the assumption that $\univ{F}_m$ is infinite to a contradiction by using Dickson's lemma~\cite[Lemma A]{Dic13}.

  If $\univ{F}_m$ is infinite, then there is an infinite subset $\univ{F}_m'$ of hypergraphs which have the same (nonempty) twin classes. That is, the tuples representing the hypergraphs of~$\univ{F}_m'$ have the same $0$-entries. For hypergraphs~$\Hyp, \G$ with the same twin classes, $t_\Hyp \leq t_\G$ implies $\Hyp \preceq \G$. Thus, $\univ{F}_m'$ gives an infinite set~$T$ of tuples that are pairwise incomparable under $\leq$. Dickson's lemma states that for every set $S \subseteq \univ{N}^{\ell}$ there exists a finite subset $S' \subseteq S$ such that for each $s \in S$ there is an $s' \in S'$ with $s' \leq s$. This is a contradiction to $T$ containing infinitely many incomparable tuples. Hence, $\univ{F}_m$ is finite.

  Finally, we simply choose the function~$\psi(m)$ in \cref{nonuni-fpt}
 as the largest size of a twin class of a hypergraph in~$\univ{F}_m$.
\end{proof}

\noindent
Let us briefly consider the implications of \cref{nonuni-fpt} for our ultimate goal---designing algorithms that compute for any given hypergraph a planar support if there is one and have fixed-parameter running time with respect to the number~$\nEd$ of hyperedges in the input hypergraph.
\Cref{nonuni-fpt} does imply that, for each $\nEd \in \univ{N}$, there is such an algorithm which works for all input hypergraph with at most $\nEd$ edges: 
The algorithm has the minimal yes-instances (in $\univ{F}_m$ from the proof) as hard-coded constants and checks whether the given hypergraph $\G$ satisfies $\Hyp \preceq \G$ for some $\Hyp \in \univ{F}_\nEd$.
However, a priori we do not not know how to compute the set~$\univ{F}_m$ of minimal \yesin s, making this result nonconstructive and thus not useful in practice. Furthermore, we would like to instead have one algorithm which works for any input hypergraph.
Hence, to eventually obtain implementable algorithms that are able to deal with any input, it is important to constructivize \cref{nonuni-fpt}. Below, we provide such a constructivization for \loutp\ supports.

\section{Relevant twins for $\lrs$-outerplanar supports}\label{sec:uniform-fpt}\label{sec:application}
Towards making the result of \cref{nonuni-fpt}
algorithmically exploitable,
we now give an explicit upper bound on the function~$\psi$.
Concretely, we prove that
out of each twin class of a hypergraph~$\Hyp$, we can
remove all but $\psi(m, r)$ twins such that the resulting hypergraph
has an $r$-outerplanar support if and only if~$\Hyp$ has. In other
words, we prove that the following data reduction rule is correct.
\begin{rrule}\label{ps:rr}
  Let $\Hyp$~be a hypergraph with $m$~edges. If there is a twin class with more than $\psi(m, r) = \suppsize$ vertices, then remove one vertex out of this class.
\end{rrule}
\noindent
Assuming that \cref{ps:rr} is correct, \cref{thm:lintime} follows.
\begin{proof}[Proof of \cref{thm:lintime}]
  \cref{ps:rr} can be applied exhaustively in linear time because the
  twin classes can be computed in linear time~\cite{HPV99}. After
  this, each twin class contains at most $\psi(m, r)$ vertices,
  meaning that, overall, at most $2^m\psi(m, r)$
  vertices remain. Testing all possible planar graphs for whether they are a
  support for the resulting hypergraph thus takes constant time if $m$
  and $r$~are constant. Hence, the overall running time is linear in
  the input size.
\end{proof}
\noindent
We mention in passing that, in the terms of parameterized
algorithmics, exhaustive application of \cref{ps:rr} yields a so-called
problem kernel.

The correctness proof for \cref{ps:rr} consists of two parts. First, in \cref{sepseq}, we show that each \loutp\ graph has a long sequence of nested separators. Here, \emph{nested} means that each separator separates the graph into a \emph{left} side and a \emph{right} side, and each left side contains all previous left sides. Furthermore, the sequence of separators has the additional property that, for any pair of separators~$S_1, S_2$, we can glue the left side of $S_1$ and the right side of $S_2$, obtaining another \loutp\ graph.

In the second part of the proof, we fix an initial \loutp\ support for our
input hypergraph. We then show that, in a long sequence of nested
separators for this support, there are two separators such
that we can carry out the following procedure. We discard all vertices
between the separators, glue their left and right sides, and reattach
the vertices which we discarded as degree-one vertices. Furthermore,
we can do this in such a way that the resulting graph is an \loutp\
support. The reattached degree-one vertices hence are not crucial to
obtain an \loutp\ support. We will show that if our input hypergraph is
larger than \(\psi(m,r)\),
then there is always at least one vertex which can be discarded because it is between two suitable separators.

We now formalize our approach. \cref{sepseq} will guarantee the
existence of a long sequence of gluable separators; it is proven in
\cref{sec:sepseq}. To formally state it, we need the following notation.

\begin{definition}[Middle set, subgraph induced by an edge set]
For an edge bipartition~$A \uplus B = E(G)$ of a graph $G$, let $M(A, B)$ be the set of vertices in~$G$ which are incident with both an edge in $A$ and in $B$, that is, \[M(A,B) \mcoloneqq \{v \in V(G) \mid \exists a \in A \exists b \in B \colon v \in a \cap b\}.\] We call $M(A, B)$ the \emph{middle set} of $A, B$. For an edge set $A \subseteq E(G)$, let $G\langle A \rangle \coloneqq (\bigcup A, A)$ be the subgraph induced by~$A$.
\end{definition}
\noindent
Observe that the definition of middle set for an edge partition corresponds exactly to the definition of middle sets in branch decompositions. %
Recall from \cref{sec:prelim} the definitions of graph gluing, boundary, and boundary labeling.

\newcommand{\sepseqthmstatement}{%
  For every connected, bridgeless, \loutp\ graph~$G$ with $n$~vertices, there is a sequence $((A_i, B_i, \beta_i))_{i = 1}^s$ where each pair $A_i, B_i \subseteq E(G)$ is an edge bipartition of~$G$ and
  $\beta_i$ is a bijection $M(A_i, B_i) \to \ivals{|M(A_i, B_i)|}$
  such that $s \geq \log(n)/\signumhalved$, and, for every $i, j$, $1 \leq i < j \leq s$,
  \begin{lemenum}
  \item $|M(A_i, B_i)| = |M(A_j, B_j)| \leq 2r$,
  \item $A_i \subsetneq A_j$, $B_i \supsetneq B_j$, and\label[stat]{sepseq:nested}
  \item $G\langle A_i \rangle \glue G\langle
    B_j \rangle$ is \loutp, where $G\langle A_i \rangle$ is
    understood to be \bdrd{\beta_i}\ and $G \langle B_j
    \rangle$ is understood to be \bdrd{\beta_j}.\label[stat]{sepseq:glueable}
  \end{lemenum}%
}
\begin{theorem}\label{sepseq}
  \sepseqthmstatement
\end{theorem}
To gain some intuition for \cref{sepseq} note that each $M(A_i, B_i)$ is a separator, separating its left side $G\langle A_i \rangle$ from its right side $G \langle B_i \rangle$ in $G$. \Cref{sepseq:nested} ensures that each left side contains all previous left sides, that is, the separators are nested. \Cref{sepseq:glueable} ensures that for any two separators in the sequence, we can glue their left and right sides and again obtain an \loutp\ graph. In this new graph, the vertices in between the separators are missing---these will be the vertices which are not crucial to obtain an \loutp\ support.

The reason why we can prove the lower bound on the length of the sequence is basically that \loutp\ graphs have a tree-like structure, whence large \loutp\ graphs have a long ``path'' in this structure, and a long path in such a structure induces many nested separators. From such a path we can pick the separators that are amenable to \cref{sepseq:glueable}.

We next formalize the crucial vertices for obtaining an \loutp\ support. These are the vertices in a smallest representative support, defined as follows.
Recall that a~vertex~$v$ in a hypergraph \emph{covers} a vertex~$u$ if $\Ed(u) \subseteq \Ed(v)$, where $\Ed(u)$ and $\Ed(v)$ are the hyperedges containing the corresponding vertices.
\begin{definition}[Representative support]\label[definition]{def:representative}
  Let $\Hyp$ be a hypergraph. A graph~$G$ is a \emph{representative
    support} for $\Hyp$ if $V(G) \subseteq V(\Hyp)$, graph $G$ is a
  support for subhypergraph~$\Hyp|_{V(G)}$ shrunken to $V(G)$, and
  every vertex in $V(\Hyp) \setminus V(G)$ is covered in $\Hyp$ by
  some vertex in~$V(G)$.
\end{definition}

\noindent
Using the sequence of separators from~\cref{sepseq}, we show that the size
of a smallest representative \loutp\ support is upper-bounded by a
function of $m$
and~$r$%
. To this end, we take an initial support, find two separators such that each
vertex in between can be removed and reattached as a neighbor of some vertex that covers it and is not in between the separators. This implies that the removed vertices need not be contained in a representative support. %
Intuitively, the
two separators have to have the same ``status'' with respect to the
hyperedges that cross them. We formalize this as follows (some further intuition is given after the definition).
\begin{definition}[Edge-bipartition signature]\label[definition]{def:sepsig}
  Let~$\Hyp = (\Ver,\Ed)$ be a hypergraph and let~$G$ be a representative planar support for $\Hyp$.  Let~$(A,B,\beta)$ be a triple where $(A, B)$ is an edge bipartition of $G$ and
  $\beta$ is a bijection $M(A, B) \to \ivals{|M(A,B)|}$. Let $\ell \coloneqq |M(A, B)|$. The \emph{signature} of $(A, B, \beta)$ is a triple $(\T, \phi, \C)$, where
  \begin{compactitem}
  \item $\T \coloneqq \{ \{F \in \Ed \mid u \in F\} \mid u\in \bigcup A\}$ is the family of sets of hyperedges that contain some vertex in~$\bigcup A$,
  \item $\phi: \ivals{\ell}\to \{[u]_\twnrl\mid u\in \Ver \}\colon j\mapsto[\beta^{-1}(j)]_\twnrl$
    maps each index of a vertex in~$M(A, B)$ to
    the twin class of that vertex, and
  \item
    $\C \coloneqq \{(F, \gamma_F) \mid F \in \Ed\}$, where $\gamma_F$ is the relation on $\ivals{\ell}$ defined by $(i, j) \in \gamma_F$ whenever both $\beta^{-1}(i), \beta^{-1}(j) \in F$ and $\beta^{-1}(i)$ is connected to $\beta^{-1}(j)$ in $G\langle B \rangle[F \cap \bigcup B]$.
    \end{compactitem}
  \end{definition}
  \noindent For easier notation, we will in the following refer to the sets in the family $\T$ in a edge bipartition signature simply as \emph{twin class}.

  Intuitively, if we have two separators~$M(A_i, B_i)$, $M(A_j, B_j)$, $i < j$, from \cref{sepseq}, then the left side, $G\langle A_i \rangle$, of~$M(A_i, B_i)$ and the right side, $G\langle B_j \rangle$, of $M(A_j, B_j)$ can be glued to obtain an \loutp\ graph.
  If the two separators additionally have the same edge-bipartition signature, then the three elements of the signature ensure in the following way that we maintain that the glued graph remains a representative support for the hypergraph~$\Hyp$.
  First, since there are the same twin classes on the two left sides $G\langle A_i \rangle$, $G\langle A_j \rangle$ due to element $\T$, and the left side $G\langle A_j \rangle$ of~$M(A_j, B_j)$ contains all vertices between the two separators, all removed vertices between the two separators are still covered in the new graph.
  Second, elements $\phi$ and $\C$ ensure that the connectivity relation of each hyperedge in $\bigcup B_i$ is the same as in $\bigcup B_j$.
  That is, regardless of how the concrete connectivity in the ``forgotten part'' between the two separators looks like, replacing $B_i$ with $B_j$ preserves the fact that each hyperedge induces a connected subgraph.
  
  We have the following upper bound on the number of different separator states.

\begin{lemma}
  In a sequence $((A_i, B_i, \beta_i))_{i = 1}^s$ as in \cref{sepseq} the number of distinct edge-bipartition signatures is upper-bounded by $\sepnum$.\label{obs:sig-num}
\end{lemma}

\begin{proof}
  Denote the signature of $(A_i, B_i, \beta_i)$ by $(\T_i, \phi_i, \C_i)$.
  There
  are at most~$2^{\nEd{}} - 1$~twin classes in~$\T_i$. Furthermore,
  for every $i, j$, $i<j$, we have~$A_i\subsetneq A_j$, which implies
  $\T_i\subseteq \T_j$. Thus, either $\T_i = \T_{i+1}$ or
  $\T_{i+1}$~comprises at least one additional twin class. Since the
  number of twin classes can increase at most~$2^{\nEd{}} - 1$ times,
  the number of different~$\T_i$ is less than~$2^{\nEd{}}$. Next,
  there are at most~$2^{\nEd{}}$ choices for a twin class for
  each~$\beta^{-1}(i) \in M(A_i, B_i)$, leading to at most
  $2^{{\nEd{}}\ell}$ different possibilities where~$\ell = |M(A_i, B_i)|$. For the last part of the
  signature, $\C_i$, for each $\gamma_e$ there are $2^{(\ell^2 -
    \ell)/2}$ possibilities, leading to $2^{m(\ell^2 - \ell)/2}$
  possibilities for~$\C_i$. Since the size~$\ell$ of the middle sets in
  \cref{sepseq} is at most~$2\layers$, the number of possible signatures is at~most~\(
    2^m \cdot 2^{2m\layers} \cdot 2^{m\cdot(2\layers^2 - \layers)}
    \meq \sepnum.
    \)
\end{proof}

\noindent
As before, let $\psi(m, r) \coloneqq \suppsize$.
\newcommand{\sss}{\ensuremath{\psi(m, r)}}
\begin{lemma}\label{lem:supp-size}
  If a hypergraph $\Hyp = (\Ver,\Ed)$ has an \loutp\ support, then it has a
  representative \loutp\ support with at most $\sss$ vertices.
\end{lemma}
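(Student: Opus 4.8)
The plan is to bound the size of a \emph{smallest} representative \loutp\ support and argue by contradiction. If $\Hyp$ has an \loutp\ support, then it has a representative one (any support $G$ with $V(G) = V(\Ver)$ is representative), so I may fix a representative \loutp\ support $G$ with a minimum number of vertices and put $n \coloneqq |V(G)|$. To invoke Theorem~\ref{sepseq} I first normalize $G$ to be connected and bridgeless: if $\Hyp$ is connected then every support is connected since the hyperedges induce connected, pairwise overlapping subgraphs, and bridges are removed by splitting $G$ at them and treating the pieces separately. I would then suppose $n > \sss$ and derive a strictly smaller representative \loutp\ support, contradicting minimality.

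For the construction I feed $G$ into Theorem~\ref{sepseq}, obtaining a nested, gluable sequence $((A_i, B_i, \beta_i))_{i=1}^s$ with $s \geq \log(n)/\signumhalved$. Writing $V_i \coloneqq \bigcup A_i$, Statement~(\ref{sepseq:nested}) gives $V_1 \subseteq \cdots \subseteq V_s$, and, relying on the fact that consecutive separators in the sequence enclose at least one new vertex (a property of the underlying tree-path construction), $|V_{t+1}| \geq |V_t| + 1$ for all $t$. Since $\sss = \suppsize = 2^{6r \cdot \sepnum \cdot \signumhalved}$, the hypothesis $n > \sss$ yields $s > 6r \cdot \sepnum$. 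By Lemma~\ref{obs:sig-num} at most $\sepnum$ distinct signatures occur, so by pigeonhole some signature is shared by more than $6r$ separators; taking the first and last of these, $i < j$, we get $j - i > 6r$ and hence $|V_j \setminus V_i| \geq j - i > 6r > 2r$. A short calculation shows the vertices deleted by gluing at $i$ and $j$ are exactly $V_j \setminus (V_i \cup M(A_j, B_j))$, and since $|M(A_j, B_j)| \leq 2r$ at least one vertex is genuinely deleted.

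I then define $G^\ast \coloneqq G\langle A_i \rangle \glue G\langle B_j \rangle$ on the vertex set $V^\ast \coloneqq V_i \cup \bigcup B_j$, which omits the deleted middle vertices. By Statement~(\ref{sepseq:glueable}), $G^\ast$ is \loutp. The heart of the proof is to verify that $G^\ast$ is a support for $\Hyp|_{V^\ast}$, that is, that every hyperedge $F$ still induces a connected subgraph. Here the signature does exactly the required bookkeeping. Equality of the $\phi$-component means the boundary vertices carrying a common label at $i$ and at $j$ are twins, so they lie in precisely the same hyperedges; hence the set of boundary labels that occur in $F$ is identical at $i$ and $j$ and the glued boundary has a consistent membership in each $F$. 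Equality of the $\C$-component means the connectivity relation of the boundary within the right side is identical at $i$ and $j$, while the connectivity within the left side is inherited verbatim from $G$ at separator $i$. Because $M(A_i, B_i)$ is a separator, the connectivity of each $G[F]$ factors through these boundary relations, so combining the left relation at $i$ with the (equal) right relation at $j$ reconstitutes a connected $G^\ast[F \cap V^\ast]$; the degenerate cases in which $F$ lies entirely on one side, or entirely among deleted vertices, are immediate.

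Finally I check that $G^\ast$ is representative. Equality of the $\T$-component forces $\{[u]_\twnrl \mid u \in V_i\} = \{[u]_\twnrl \mid u \in V_j\}$, so every deleted middle vertex has a twin surviving in $V_i \subseteq V^\ast$ and is therefore covered; likewise every vertex of $V(\Hyp) \setminus V(G)$ that was covered by a deleted vertex is now covered by that vertex's surviving twin (this is the ``reattach as a degree-one vertex'' step, phrased through covering). Thus $G^\ast$ is a representative \loutp\ support with $|V^\ast| < n$, contradicting minimality, and the bound $\sss$ follows. I expect the main obstacle to be the connectivity verification of the third paragraph: making precise that the signature records \emph{exactly} the interface data needed so that splicing the left side of one separator onto the right side of another preserves every induced hyperedge subgraph. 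The secondary nuisance is the normalization to connected, bridgeless graphs required before Theorem~\ref{sepseq} can be applied.
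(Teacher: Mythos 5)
Your overall strategy is the paper's own: fix a vertex-minimal representative \loutp\ support, normalize it so that Theorem~\ref{sepseq} applies, pigeonhole the resulting nested edge bipartitions by signature via Lemma~\ref{obs:sig-num}, glue the left side of one with the right side of another same-signature bipartition, and verify via the signature components ($\T$ for coverage, $\phi$ and \C for connectivity of each induced hyperedge subgraph) that the glued graph is again a representative \loutp\ support, contradicting minimality. Your third and fourth paragraphs match the paper's argument in substance. However, there is a genuine gap at the decisive step: your guarantee that gluing actually deletes a vertex. You assert that consecutive separators ``enclose at least one new vertex,'' i.e.\ $|V_{t+1}|\geq |V_t|+1$, calling this a property of the tree-path construction. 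Theorem~\ref{sepseq} guarantees only $A_i\subsetneq A_j$: the \emph{edge} sets grow strictly, but a newly enclosed edge may have both endpoints inside the current middle set, so the vertex sets $\bigcup A_t$ need not grow at all (e.g., enclosing the third edge of a triangle whose other two edges are already enclosed adds no vertex). Consequently your bound $|V_j\setminus V_i| > 2r$ is unjustified, and without it the glued graph may have exactly the same vertex set as $G$, so no contradiction with vertex-minimality arises. The paper closes precisely this hole with a planarity counting argument that your proposal lacks: each middle set has at most $2r$ vertices and, being a plane graph, induces at most $\max\{1, 6r-6\}$ edges, so among $6r$ nested same-signature bipartitions two must have middle sets differing in at least one vertex, which together with $A_i\subsetneq A_j$ forces a genuine vertex deletion. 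This is exactly why the factor $6r$ appears in the exponent of $\sss$; in your write-up that factor does no work.

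A second, smaller gap is the normalization. Your claim that every support of a connected hypergraph is connected does not apply here: $G$ is a support of the \emph{shrunken} hypergraph $\Hyp|_{V(G)}$, which can be disconnected (hyperedges may shrink to singletons), so minimal representative supports can be disconnected. More importantly, ``splitting $G$ at bridges and treating the pieces separately'' does not work: Theorem~\ref{sepseq} must be applied to the whole support, since hyperedges can span bridges and the gluing argument needs one global sequence of nested separators with a global length lower bound. The paper instead \emph{adds} edges---tree-like edges between connected components, and for each bridge $\{u,v\}$ an extra edge $\{u,w\}$ to a neighbor $w$ of $v$ lying on the face of $u$, embedded so that every vertex keeps its layer---which preserves $r$-outerplanarity and can only help the support property, since adding edges never disconnects an induced subgraph.
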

\begin{proof}
  Let~$G=(W,E)$ be a representative \loutp\ support for~$\Hyp$ with
  the minimum number of vertices and fix a corresponding planar embedding. Assume towards a contradiction
  that~$|W|>\sss$. We show that there is a representative support
  for~$\Hyp$ with less than $\sss$ vertices.

  We aim to apply \cref{sepseq} to $G$. For this we need that $G$ is connected and does not contain any bridges. Indeed, if $G$ is not connected, then pick a vertex incident with the outer face for each connected component and add edges to make the picked vertices induce a path.
  This does not affect the outerplanarity number of~$G$ (although it adds bridges). If $G$ has a bridge~$\{u, v\}$, then proceed as follows. At least one of the ends of the bridge, say $v$, has degree at least two because~$|W| > \sss \geq 2$. One neighbor~$w \neq u$ of $v$ is incident with the same face
  as~$u$, because~$\{u, v\}$ is a bridge. Add the edge~$\{u, w\}$. Thus, edge~$\{u, v\}$ ceases to be a bridge. We can embed $\{u, w\}$ in such a way that the face~$\geom{F}$ incident with $u, v$, and $w$ is split into one face that is incident with only $\{u, v, w\}$ and devoid of any other vertex, and one face~$\geom{F}'$ that is incident with all the vertices that are incident with $\geom{F}$ including~$u, v$, and~$w$. This implies that each vertex retains its layer~$L_i$, meaning that $G$ remains \loutp. Thus, we may assume that $G$ is connected, bridgeless, and \loutp.

  Since~$G$ contains more than~$\sss$ vertices, there is a sequence~$\S = ((A_i, B_i, \beta_i))_{i = 1}^s$ as in~\cref{sepseq} of length at least
  \begin{equation*}
    s \mgeq \frac{\log(\sss)}{\signumhalved} \meq \frac{6r \cdot 2^{\nEd{}\cdot (2\lrs^2+\lrs+1)}\cdot \signumhalved}{\signumhalved} \meq 6r \cdot \sepnum.
  \end{equation*}
  Since there are less than \sepnum\ different signatures in \S\ (\cref{obs:sig-num}), there are $6r$ elements of \S\ with the same signature.
  Note that each middle set~$M(A_i, B_i)$ induces a planar graph in $G$ and, since $|M(A_i, B_i)| \leq 2r$, this graph has at most \(\max\{1, 3 |M(A_i, B_i)| - 6\} \leq \max\{1, 6r - 6\} \) edges.
  Recall that $A_i \subsetneq A_{i + 1}$ for each $i \in [s - 1]$, and thus, $|A_{j}| - |A_{i}| \geq j - i$ for each $i, j \in [s]$ with $i < j$.
  Thus, if $i, j \in [s]$ with $i + 6r \leq j$, $M(A_i, B_i)$ and $M(A_j, B_j)$ differ in at least one vertex.
  Thus, there are two edge bipartitions $(A_i,B_i,\beta_i)$ and~$(A_j,B_j,\beta_j)$, $i<j$, in \S\ with the same signature such that the middle sets $M(A_i, B_i)$, $M(A_j, B_j)$ differ in at least one vertex.

  \newcommand{\glueg}{\ensuremath{G_{ij}}}
  Let~$\glueg \coloneqq G\langle A_i \rangle \glue
  G\langle B_j \rangle$,
  wherein $G\langle A_i \rangle$ is \bdrd{\beta_i} and $G\langle B_j
  \rangle$ is \bdrd{\beta_j}, and let~$W' \coloneqq
  V(\glueg)$.
  Note that $|W'|<|W|$
  since the middle sets of the two edge bipartitions differ in at least one vertex and since $A_i \subsetneq A_j$.

  We prove that \glueg\ is a representative support for $\Hyp$, that
  is, each vertex $\Ver \setminus W'$ is covered in $\Hyp$ by some
  vertex in $W'$ and that \glueg\ is a support for $\Hyp|_{W'}$.
  Herein, for the sake of simpler notation, when referring to the covering condition of representative supports, we assume that each vertex in \glueg\ that results from identifying two vertices in $G$ is equal to an arbitrary one of the two identified vertices.
  Since $\glueg$ is \loutp\ by \cref{sepseq},
  \cref{sepseq:glueable}, the existence of \glueg\ contradicts the choice of~$G$ according
  to the minimum number of vertices, thus proving the lemma.

  To prove that each vertex $\Ver \setminus W'$ is covered by some vertex in $W'$, we show that $\{[u]_\twnrl \mid u \in \Ver\} = \{[u]_\twnrl \mid u \in W'\}$.
  (Hence, every vertex in $\Ver \setminus W'$ has a twin in $W'$; observe that twins cover each other.)
  Since $G = (W, E)$ is a representative support, $\{[u]_\twnrl \mid u \in V\} = \{[u]_\twnrl \mid u \in W\}$.
  Furthermore, by the definition of signature, we have $\{[u]_\twnrl \mid u \in \bigcup A_i\} = \{[u]_\twnrl \mid u \in \bigcup A_j\}$.
  Thus, for each vertex~$u \in W \setminus W'$,
  there is a vertex $v \in W'$ with $[u]_\twnrl = [v]_\twnrl$, meaning that, indeed, $\{[u]_\twnrl \mid u \in V\} = \{[u]_\twnrl \mid u \in W'\}$.

  To show that \glueg\ is a representative support it remains to show that it is a support for $\Hyp|_{W'}$, that is, each hyperedge~$F'$ of~$\Hyp|_{W'}$ induces a connected graph~$\glueg[F']$. Let~$F$ be a hyperedge of~$\Hyp$ such that~$F\cap W'=F'$. Observe that such a hyperedge~$F$ exists and that $G[F \cap W]$~is connected since~$G$ is a representative support of~$\Hyp$. Denote by $S_k$ the middle set~$M(A_k, B_k)$ of $(A_k, B_k)$ in $G$ for $k \in \{i, j\}$ and by $S$ the middle set $M(A_i,B_j)$ of $(A_i, B_j)$ in \glueg.
  Note that the set~$S$ is obtained by identifying the vertices of~$S_i$ with~$S_j$ in the construction of $G_{ij}$. %

  To show that~$\glueg[F']$ is connected, consider first the case that $F \cap (S_i \cup S_j) = \emptyset$. Since each vertex in $V \setminus W'$ is covered
  by a vertex in~$W'$,
  we have that all vertices in~$F$ are contained in either~$G\langle A_i \rangle$ or $G \langle B_j \rangle$ along with all edges of $G[F]$.
  All these edges are also present in \glueg\ whence $\glueg[F']$ is connected.

  Now consider the case that $F \cap (S_i \cup S_j) \neq \emptyset$. Since $S_i$ and $S_j$ are separators in~$G$, each vertex in~$F \setminus (S_i \cup S_j)$ is connected in~$G[F]$ to some vertex in $S_i$ or~$S_j$ via a path with internal vertices in~$F \setminus (S_i \cup S_j)$. We consider the connectivity relation of their corresponding vertices in $S$. To this end, for a graph $H$ and $T \subseteq V(H)$ use $\gamma(T, H)$ for the equivalence relation on~$T$ of connectivity in~$H$. That is, for $u, v \in T$ we have $(u, v) \in \gamma(T, H)$ if $u$ and $v$ are connected in~$H$. Using this terminology, since both $S_i$ and $S_j$ equal $S$ in \glueg, to show that $\glueg[F']$ is connected, it is enough to prove %
  that the transitive closure~$\delta$ of \(\gamma(F' \cap S, \glueg\langle A_i\rangle) \ \cup\ \gamma(F' \cap S, \glueg\langle B_j\rangle)\) contains only one equivalence class.%

  Denote by $\hat{G}$ the graph obtained from $G$ by identifying each $v \in S_i$ with $\beta_j^{-1}(\beta_i(v)) \in S_j$, hence, identifying $S_i$ and $S_j$, resulting in the set~$S$. Relation $\alpha := \gamma(F \cap S, \hat{G})$ has only one equivalence class and, moreover, it is the transitive closure of
  \(\gamma(F \cap S_i, G\langle A_i\rangle)\ \cup\ \gamma(F \cap S, \hat{G}\langle B_i \setminus B_j\rangle)\ \cup\ \gamma(F \cap S_j, G\langle B_j\rangle)\), wherein we identify each $v \in S_i$ with $\beta_j^{-1}(\beta_i(v)) \in S_j$ as above and, thus, $S_i = S_j = S$.
  We have~\(\gamma(F' \cap S, \glueg \langle A_i\rangle) \meq \gamma(F \cap S_i, G\langle A_i\rangle)\) and \(\gamma(F' \cap S, \glueg \langle B_j\rangle) \meq \gamma(F \cap S_j, G\langle B_j\rangle).\) Thus for $\alpha \meq \delta$ it suffices to prove that \(\gamma(F \cap S, \hat{G}\langle B_i \setminus B_j\rangle) \ \subseteq\ \gamma(F' \cap S_j,\glueg\langle B_j\rangle).\) Indeed, the left-hand side $\gamma(F \cap S, \hat{G}\langle B_i \setminus B_j\rangle)$ is contained in $\gamma(F \cap S_i, G \langle B_i\rangle)$: Let $(\T, \phi, \C)$ be the signature of~$(A_i, B_i, \beta_i)$ and $(A_j, B_j, \beta_j)$ and $(F, \gamma_F) \in \C$. Note that $ \gamma(F \cap S_i, G \langle B_i\rangle) = \gamma_F = \gamma(F \cap S_j, G \langle B_j\rangle)$ where we abuse notation and set $u = \beta_i(u)$ for $u \in S_i$ and $v = \beta_j(v)$ for $v \in S_j$. Hence, $\gamma(F \cap S, \hat{G}\langle B_i \setminus B_j\rangle) \subseteq \gamma(F \cap S_j, G \langle B_j \rangle) = \gamma(F' \cap S_j, G \langle B_j \rangle) = \gamma(F' \cap S_j, \glueg \langle B_i\rangle)$. Thus, indeed, $\delta = \alpha$, that is, $F'$ is connected.
\end{proof}

\noindent We now use the upper bound on the number of vertices in representative supports to get rid of superfluous twins. %
First, we show that
representative supports can be extended to obtain a support.
\begin{lemma}\label{lem:representative-hypergraph}
  Let~$G=(W,E)$ be a representative \loutp\ support for a hypergraph~$\Hyp =
  (\Ver,\Ed)$. Then $\Hyp$ has an \loutp\ support in which all vertices
  of~$\Ver\setminus W$ have degree one.
\end{lemma}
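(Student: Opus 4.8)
The plan is to construct the desired support explicitly by attaching every forgotten vertex as a leaf to a vertex that covers it. Concretely, for each $u \in \Ver \setminus W$ the definition of representative support guarantees a vertex $v_u \in W$ covering $u$, that is, $\Ed(u) \subseteq \Ed(v_u)$. I would let $G'$ be the graph on vertex set $\Ver$ obtained from $G$ by adding, for every $u \in \Ver \setminus W$, the pendant edge $\{u, v_u\}$. By construction every vertex of $\Ver \setminus W$ then has degree one in $G'$, so it remains to verify that $G'$ is an \loutp\ support for $\Hyp$.

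For the support property I would fix a hyperedge $F \in \Ed$ and show that $G'[F]$ is connected. The key observation is that covering forces $v_u \in F$ whenever $u \in F$: from $u \in F$ we get $F \in \Ed(u) \subseteq \Ed(v_u)$, hence $v_u \in F \cap W$. Thus in $G'[F]$ every leaf $u \in F \setminus W$ is attached to a vertex of $F \cap W$, and it suffices that $G[F \cap W]$ be connected and nonempty. If $|F \cap W| \geq 2$, then $F \cap W$ is a hyperedge of the shrunken hypergraph $\Hyp|_W$, so $G[F \cap W]$ is connected because $G$ is a support for $\Hyp|_W$; if $|F \cap W| = 1$ the single vertex is trivially connected and every leaf of $F$ attaches to it; and $F \cap W = \emptyset$ is impossible, since $F$ is nonempty and any $u \in F$ lies in $F \cap W$ or contributes its cover $v_u \in F \cap W$. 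In every case $G'[F]$ consists of the connected graph $G[F \cap W]$ with leaves hanging off its vertices, hence is connected.

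It remains to argue that adding the leaves keeps $G'$ \loutp, and this is the step I expect to require the most care. I would fix a planar embedding of $G$ witnessing that it is \loutp\ and add the leaves one at a time, placing each leaf $u$ in a face of $G$ incident with $v_u$ that has become part of the outer face after the first $i-1$ rounds of deleting all outer-face vertices, where $i \leq \lrs$ is the index of the layer $L_i$ containing $v_u$. Such a face exists precisely because $v_u$ lies on the outer face at that stage. Since $u$ has degree one it does not split the face it is placed in, so it changes neither the planarity of the embedding nor which vertices of $G$ lie on the outer face at any step of the peeling process; consequently every original vertex keeps its layer, while $u$ becomes exposed no later than $v_u$, i.e. in layer $L_k$ for some $k \leq i \leq \lrs$. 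Placing all leaves in pairwise disjoint small neighbourhoods of their attachment vertices keeps the drawing crossing-free, so $G'$ is \loutp. Together with the support property established above, this proves the lemma.
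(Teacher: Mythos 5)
Your proposal is correct and takes essentially the same approach as the paper's proof: the identical construction (attaching each vertex of~$\Ver\setminus W$ as a pendant neighbour of a covering vertex in~$W$), the same covering argument showing that every hyperedge~$F$ induces the connected graph $G[F\cap W]$ with leaves hanging off it, and the same adaptation of the witnessing embedding that places each leaf in a face which becomes exposed no later than its neighbour's layer. Your explicit case distinction on $|F\cap W|$ and the argument that pendant vertices preserve all original layers merely spell out details the paper's proof leaves implicit.
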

\begin{proof}
  Let $G'$~be the graph obtained from~$G$ by making each vertex~$v$ of~$\Ver\setminus W$ a degree-one neighbor of a vertex in~$W$ that covers~$v$ (such a vertex exists by the definition of representative support). Clearly, the resulting graph is planar. It is also \loutp, which can be seen by adapting an \loutp\ embedding of $G$ for~$G'$: If the neighbor~$v$ of a new degree-one vertex~$u$ is in~$L_1$, then place $u$ in the outer face. If $v \in L_i$, $i > 1$, then place $u$ in a face which is incident with~$v$ and a vertex in~$L_{i-1}$ (such a face exists by the definition of~$L_i$).

  It remains to show that~$G'$ is a support for~$\Hyp$. Consider a hyperedge~$F\in \Ed$. Since~$G$ is a representative support for~$\Hyp$, we have that~$F\cap W$ is nonempty and that $G[F\cap W]$ is connected. In~$G'$, each vertex~$u\in F\setminus W$ is adjacent to some vertex~$v\in W$ that covers~$u$. Hence~$v \in F$. Thus, $G'[F]$ is connected as $G'[F\cap W]$ is connected and all vertices in~$F\setminus W$ are neighbors of a vertex in~$F\cap W$.
\end{proof}

We now use \cref{lem:representative-hypergraph} to show that, if there is a twin class that contains more vertices than a small representative support, then we can safely remove one vertex from this twin class.
\begin{lemma}\label{lem:rule}
  Let $\ell \in \univ{N}$, let~$\Hyp$ be a hypergraph, and let~$v\in \Ver(\Hyp)$ be a
  vertex such that $|[v]_{\twnrl}|\ge \ell$. If~$\Hyp$ has a representative
  \loutp\ support with less than~$\ell$ vertices, then~$\Hyp-v$ has an \loutp\ support.
\end{lemma}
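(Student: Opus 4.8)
The plan is to produce a representative \loutp\ support for $\Hyp-v$ and then invoke Lemma\nobreakspace \ref {lem:representative-hypergraph} to turn it into a genuine \loutp\ support of $\Hyp-v$. Write $G=(W,E)$ for the given representative \loutp\ support with $|W|<\ell$. The starting observation is a pigeonhole argument: since $|[v]_\twnrl|\ge \ell>|W|$, the twin class $[v]_\twnrl$ is not contained in $W$, so it meets $\Ver(\Hyp)\setminus W$; when $v\in W$ this forces a twin $w\in[v]_\twnrl\setminus W$ with $w\neq v$.

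I would split into two cases according to whether $v\in W$, using in both a single technical ingredient: removing one vertex does not destroy the defining properties of a representative support for the \emph{other} vertices. Concretely I would establish two small facts. First, for a removed vertex outside $W$ the shrunken hypergraph is unchanged, $(\Hyp-v)|_W=\Hyp|_W$ (removing $v$ and then shrinking to $W$ coincides with shrinking $\Hyp$ directly to $W$), so $G$ remains a support for $(\Hyp-v)|_W$. Second, and more delicate, covering is preserved: if $x,u\in\Ver(\Hyp)\setminus\{v\}$ and $\Ed_\Hyp(u)\subseteq\Ed_\Hyp(x)$, then $\Ed_{\Hyp-v}(u)\subseteq\Ed_{\Hyp-v}(x)$. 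This holds because every hyperedge of $\Hyp-v$ containing $u$ has the form $F\setminus\{v\}$ with $u\in F$; then $F\in\Ed_\Hyp(u)\subseteq\Ed_\Hyp(x)$ yields $x\in F$ and hence $x\in F\setminus\{v\}$ since $x\neq v$.

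In the case $v\notin W$ these two facts show directly that $G$ is already a representative support for $\Hyp-v$: its vertex set avoids $v$, it supports $(\Hyp-v)|_W=\Hyp|_W$, and any $u\in\Ver(\Hyp-v)\setminus W$ is covered in $\Hyp$ by some $x\in W\subseteq\Ver(\Hyp)\setminus\{v\}$, hence is covered in $\Hyp-v$ by covering-preservation. In the case $v\in W$ I would use the twin $w\in[v]_\twnrl\setminus W$ found above. Because $v$ and $w$ lie in exactly the same hyperedges, the transposition swapping $v$ and $w$ fixes every hyperedge setwise and is therefore an automorphism of $\Hyp$, which restricts to an isomorphism $\Hyp-v\cong\Hyp-w$. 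Since $w\notin W$, the argument of the first case (applied with $w$ in the role of the removed vertex) shows $G$ is a representative \loutp\ support for $\Hyp-w$; by Lemma\nobreakspace \ref {lem:representative-hypergraph} the hypergraph $\Hyp-w$ then has an \loutp\ support, and transporting it through the isomorphism gives an \loutp\ support of $\Hyp-v$.

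I expect the main obstacle to be the covering-preservation step: removing $v$ alters the hyperedge family (empty and singleton sets are discarded and distinct hyperedges may collapse onto one another), so one must argue carefully that covering among the surviving vertices is not lost, and handle the subtlety that the vertex originally covering $u$ could be $v$ itself, which is exactly where the twin $w$ becomes necessary. The remaining bookkeeping (equality of the shrunken hypergraphs and the automorphism coming from a twin transposition) is routine.
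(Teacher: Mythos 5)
Your proof is correct, but it routes through the key lemma differently than the paper does. The paper's argument is graph-side: it picks a twin of $v$ outside $W$ and assumes w.l.o.g.\ that this missing twin is $v$ itself (your transposition automorphism is precisely what justifies that w.l.o.g.), applies Lemma~\ref{lem:representative-hypergraph} to $\Hyp$ to obtain an \loutp\ support~$G'$ of the \emph{whole} hypergraph in which $v$ has degree one, and then deletes $v$ from the support, noting that for every hyperedge~$F$ the degree-one vertex $v$ is not a cut-vertex of $G'[F]$, so $G'-v$ is still a support of $\Hyp-v$. You instead work hypergraph-side: you delete the vertex from $\Hyp$ first, verify that $G$ remains a representative support of the smaller hypergraph --- which is exactly where your two stability facts, $(\Hyp-w)|_W=\Hyp|_W$ and preservation of covering among surviving vertices, are needed --- and only then apply Lemma~\ref{lem:representative-hypergraph}, to $\Hyp-w$ rather than to $\Hyp$, finally transporting the support through the isomorphism $\Hyp-v\cong\Hyp-w$. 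Both arguments are sound. The paper's order of operations sidesteps your ``main obstacle'' entirely: since $\Hyp-v$'s hyperedge family is never analyzed, no covering-preservation argument is needed, and the only price is the immediate observation about degree-one vertices. Your version costs more bookkeeping but makes explicit two reusable facts about stability of representative supports under deletion of vertices outside them, and it turns the paper's implicit twin-interchange w.l.o.g.\ into a rigorous automorphism argument.
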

\begin{proof}
  Let~$G=(W,E)$ be a representative \loutp\ support for~$\Hyp$ such that $|W|<\ell$. Then at least one vertex of~$[v]_{\twnrl}$ is not in~$W$ and we can assume that this vertex is~$v$ without loss of generality. Thus, $\Hyp$ has an \loutp\ support~$G'$ in which~$v$ has degree one by~\cref{lem:representative-hypergraph}. The graph~$G'-v$ is an \loutp\ support for~$\Hyp-v$: For each hyperedge~$F$ in~$\Hyp-v$, we have that~$G'[F\setminus \{v\}]$ is connected because~$v$ is not a cut-vertex in~$G'[F]$ (since it has degree one).
\end{proof}
\noindent Now we combine the observations above with the fact that
there are small \loutp\ supports to prove that \cref{ps:rr} is correct. %

\begin{proof}[Correctness proof for \cref{ps:rr}]
  Consider an instance~$\Hyp=(\Ver,\Ed)$ of \PS{}
  to which \cref{ps:rr} is applicable and let~$v\in \Ver$ be a vertex to be removed, that is, $v$ is contained in a twin class of size more
  than~$\psi(m, r)$. By~\cref{lem:supp-size}, if~$\Hyp$ has an \loutp\
  support, then it has a representative \loutp\ support with at most
  $\psi(m, r)$ vertices. By~\cref{lem:rule}, this implies that~$\Hyp-v$
  has an \loutp\ support. Moreover, if~$\Hyp-v$ has an \loutp\
  support, then this \loutp\ support is a representative \loutp\
  support for~$\Hyp$. By~\cref{lem:representative-hypergraph}, this
  implies that~$\Hyp$ has an \loutp\ support. Therefore, $\Hyp$~and~$\Hyp-v$ are equivalent instances of \PS, that is, $\Hyp$ has an \loutp\ support if and only if $\Hyp - v$ has,
  and~$v$ can be safely removed
  from~$\Hyp$.
\end{proof}

\section{A sequence of gluable edge bipartitions}\label{sec:sepseq}

In this section, we prove \cref{sepseq}.
For convenience, it is restated below.
Recall also from \cref{sec:application} the intuitive description of the theorem statement, the definition of middle set for an edge %
bipartition, the subgraph~$G\langle A\rangle$ induced by an edge set, and from \cref{sec:prelim} the definitions of graph gluing, boundary, and boundary labeling.

\renewcommand{\thetheorem}{\ref{sepseq}}
\begin{theorem}
  \sepseqthmstatement
\end{theorem}
\addtocounter{theorem}{-1}

The proof relies crucially on \scbd s~\cite{DorPBF10,MP15}.
Recall the corresponding definitions from \cref{sec:sparse-planar-prelim}.

\paragraph{Outline of the proof of \cref{sepseq}.} We first transform the planar embedding of~$G$ into an embedding in the sphere.
Using the fact that $r$-outerplanar graphs have branchwidth at most $2r$~\cite{Biedl15}, we may apply \cref{scbd}, from which we obtain a \scbd{} for $G$ of width at most~$2r$ (recall that the width of a branch decomposition is the size of its largest middle set).
The edge bipartitions in \cref{sepseq} are defined based on the edges in a longest path in the decomposition tree corresponding to the \scbd. The longest path in the decomposition tree has length at least $2\log(n)$, and the edges on this path define a sequence of edge bipartitions, a supersequence of the one in \cref{sepseq}. We define a labeling for each bipartition, which is a string containing \signumbits\ bits, that determines the pairs of edge bipartitions that can be glued so to obtain an \loutp\ graph. The sequence in \cref{sepseq} is then obtained from those bipartitions that have the same labeling. The sphere-cut property of the branch decomposition gives one noose in the sphere for each edge bipartition in the sequence, such that it separates the parts in the edge bipartition from one another. The nooses of the \scbd\ will be crucial in the proof of \cref{sepseq:glueable} in \cref{sepseq}, that is, the \loutp{ity} of the glued graphs.

Let us give some more details concerning the \loutp{ity} of the glued graphs. After sanitizing the nooses, we can assume that they separate the sphere into \emph{left} disks and \emph{right} disks in such a way that each left disk contains all left disks with smaller indices. Hence, for each pair of nooses, we can cut out a left disk and a right disk, and glue them along their corresponding nooses such that we again get a sphere. Alongside the sphere, we get a graph embedded in it that corresponds to the left and right sides of the separators induced by the nooses. It then remains to make the gluing so that the graph remains \loutp, that is, it results in a graph embedded without edge crossings such that each vertex has a face path of length at most~$r$ to the outer face. For this we define a labeling for each edge bipartition and we keep only the largest subsequence of edge bipartitions that have the same labeling.

The labelings roughly work as follows. We want to use the labelings to ensure that the layer of each vertex in $G\langle A_i \rangle \glue G \langle B_j \rangle$ does not increase in comparison to~$G$. For this, for each face touched by the nooses that correspond to $(A_i, B_i)$ and $(A_j, B_j)$, we note in the labelings how far it is away from the outer face (or, rather, the face in the sphere corresponding to the outer face in the plane), and we note for each pair of faces touched by the noose how far they are away from each other. Then, if two edge bipartitions have the same labeling, each vertex in the glued graph will be at most as far away from the faces touched by the noose and, hence, at most as far away from the outer face.

As we will see below, each edge-bipartition labeling can be encoded in $\signumbits$ bits. Thus, out of the $2\log(n)$ edge bipartitions that we obtain from the longest path in the decomposition tree, there are at least $\log(n)/\signumhalved$ edge bipartitions with the same labeling.

The rest of this section is dedicated to the formal proof of \cref{sepseq}.
\newcommand{\glueg}{\ensuremath{G_{ij}}}%
\begin{proof}[Proof of \cref{sepseq}]%
  In the following, fix an arbitrary \loutp\ embedding of~$G$.

  \proofparagraphns{An initial sequence $\T$ of edge bipartitions.} Consider the canonical embedding of $G$ into a sphere~\sph\ that we
  obtain by taking a circle that encloses but does not intersect~$G$
  and identifying all points in the unbounded region of the plane
  that is separated off by this circle. Since $G$ is \loutp, it has
  branchwidth at most~$2r$~\cite[Lemma~3]{Biedl15}. By \cref{scbd}, there is a
  \scbd~$(T, \lambda)$ for $G$ of width at most~$2r$. We define the
  sequence in \cref{sepseq} based on $(T, \lambda)$.

  Consider a longest path~$P$ in $T$.
  Note that the endpoints of $P$ are leaves of~$T$.
  Denote by $e_1$ that edge of~$G$
  that is the preimage of the first vertex of~$P$ under
  the mapping~$\lambda$ of leaves of $T$ to edges of~$G$.
  Since each edge in $T$ induces a bipartition of the edges
  in $G$, so does each edge on $P$. Define the sequence $\T \coloneqq
  ((C_i, D_i))_{i = 1}^t$, where $(C_i, D_i)$ is the bipartition of
  $E(G)$ induced by the~$i$th~edge on~$P$ such that $e_1 \in C_i$. We
  have $C_i \subsetneq C_{i + 1}$ and $D_i \supsetneq D_{i + 1}$
  because $T$ is a ternary tree and $\lambda$ is a bijection. We later
  need a lower bound on the length of~$\T$. For this, observe that $P$
  contains at least $2\log(n)$ edges, because $G$ contains at least~$n$
  edges (there are no vertices of degree one) and $T$ is a ternary
  tree. Hence, sequence~\T also has at least $2\log(n)$ entries. The
  sequence in \cref{sepseq} is defined based on a subsequence of~\T.

\proofparagraph{Obtaining a sequence of noncrossing nooses.}
To define the desired subsequence of \T, we choose one noose~$\geom{N}_i$
for each $(C_i, D_i) \in \T$ such that the resulting sequence of
nooses has the following property. Denote by $\geom{C}_i, \geom{D}_i$
the open disks in which $\geom{N}_i$ separates $\geom{S}$ such that
$C_i \subseteq \geom{C}_i$ and $D_i \subseteq \geom{D}_i$. Then, it
holds that for any two $i, j$ with $i < j$ we have $\geom{C}_i
\subsetneq \geom{C}_{j}$ and $\geom{D}_i \supsetneq \geom{D}_{j}$. If this is the case, then we
say that the nooses $\geom{N}_i$ and $\geom{N}_j$ are
\emph{noncrossing}. Otherwise, that is, if $\geom{C}_i \cap \geom{D}_{j} \neq \emptyset$ and, equivalently, $\geom{D}_i \cap \geom{C}_{j} \neq \emptyset$, then we say that $\geom{N}_i$ and $\geom{N}_j$ \emph{cross} each other.
See
\cref{fig:orange} for examples.
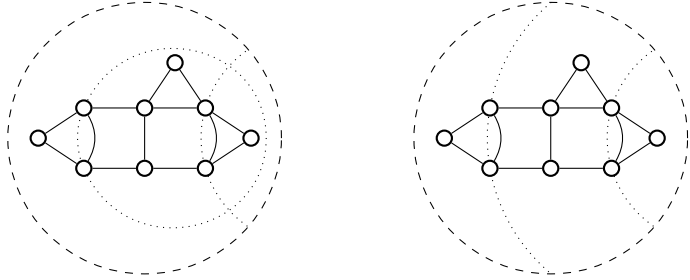
\begin{figure}[t]
  \centering \hfill \begin{tikzpicture}[scale=.4,nodes={vert}, every edge/.style={draw, semithick}]

\draw[dashed]  (0,0) ellipse (4.5 and 4.5);

\node (v1) at (-3.5,0) {};
\node (v2) at (-2,1) {};
\node (v3) at (-2,-1) {};
\node (v4) at (0,1) {};
\node (v5) at (0,-1) {};
\node (v7) at (2,1) {};
\node (v8) at (2,-1) {};
\node (v9) at (3.5,0) {};
\node (v6) at (1,2.5) {};

\begin{pgfonlayer}{background}
  \clip  (0,0) ellipse (4.5 and 4.5);
  \mppath[dotted, thick]{(-2,1) .. (-2,-1) .. (4,0) .. cycle};
  \mppath[dotted, thick]{(3.5,3) .. (2,1) .. (2,-1) .. (3.5,-3)};
\end{pgfonlayer}

\draw  (v1) edge (v2);
\draw  (v1) edge (v3);
\draw  (v3) edge[bend right] (v2);
\draw  (v4) edge (v2);
\draw  (v3) edge (v5);
\draw  (v4) edge (v5);
\draw  (v4) edge (v6);
\draw  (v4) edge (v7);
\draw  (v6) edge (v7);
\draw  (v8) edge (v5);
\draw  (v7) edge[bend left] (v8);
\draw  (v8) edge (v9);
\draw  (v7) edge (v9);

\end{tikzpicture}
  \hfill \begin{tikzpicture}[scale=.4,nodes={vert},every edge/.style={draw, semithick}]

\draw[dashed]  (0,0) ellipse (4.5 and 4.5);

\node (v1) at (-3.5,0) {};
\node (v2) at (-2,1) {};
\node (v3) at (-2,-1) {};
\node (v4) at (0,1) {};
\node (v5) at (0,-1) {};
\node (v7) at (2,1) {};
\node (v8) at (2,-1) {};
\node (v9) at (3.5,0) {};
\node (v6) at (1,2.5) {};

\begin{pgfonlayer}{background}
  \clip  (0,0) ellipse (4.5 and 4.5);
  \mppath[dotted, thick]{(0.7,5) .. (-2,1) .. (-2,-1) .. (.7,-5)};
  \mppath[dotted, thick]{(3.5,3) .. (2,1) .. (2,-1) .. (3.5,-3)};
\end{pgfonlayer}

\draw  (v1) edge (v2);
\draw  (v1) edge (v3);
\draw  (v3) edge[bend right] (v2);
\draw  (v4) edge (v2);
\draw  (v3) edge (v5);
\draw  (v4) edge (v5);
\draw  (v4) edge (v6);
\draw  (v4) edge (v7);
\draw  (v6) edge (v7);
\draw  (v8) edge (v5);
\draw  (v7) edge[bend left] (v8);
\draw  (v8) edge (v9);
\draw  (v7) edge (v9);

\end{tikzpicture}
  \hfill\mbox{}
  \caption{A graph embedded in the sphere and two crossing nooses
    (dotted, left) and two noncrossing nooses (dotted, right). We
    projected the sphere into the plane by replacing a point in the
    sphere with a circle (dashed) and drawing all remaining points
    inside this circle. Both pairs of nooses represent the same edge
    bipartitions. Note that the two nooses on the right share a point
    on the sphere.}
  \label{fig:orange}
\end{figure}

To see that we can choose the nooses in this way, first choose them
arbitrarily and then consider two crossing nooses $\geom{N}_i,
\geom{N}_j$, $i < j$, that is, $\geom{C}_{i} \cap \geom{D}_{j} \neq
\emptyset$. We define a noose~$\tilde{\geom{N}}_i$ which we obtain
from~$\geom{N}_i$ by replacing each maximal subsegment contained in
$\geom{D}_j$ by the corresponding subsegment of $\geom{N}_j$ which is
contained in~$\geom{C}_i$.
There is no edge of $G$ contained in
$\geom{C}_{i} \cap \geom{D}_{j}$ because such an edge then would also
be in $C_{i} \cap D_{j} \subseteq C_i \cap D_i$, a contradiction to
the fact that $C_i, D_i$ is a bipartition of $E(G)$. Hence,
noose~$\tilde{\geom{N}}_i$ separates $\geom{S}$ into two open disks
$\tilde{\geom{C}}_i, \tilde{\geom{D}}_i$ such that $C_i =
\tilde{\geom{C}}_i \cap E(G)$ and~$D_i = \tilde{\geom{C}}_i \cap
E(G)$. Thus, $\tilde{\geom{N}}_i$ fulfills the conditions for the
nooses in \scbd{s} and we may choose $\tilde{\geom{N}}_i$ for $(C_i,
D_i)$ instead of~$\geom{N}_i$.

Clearly, $\tilde{\geom{N}}_i$ and $\geom{N}_j$ are
noncrossing. Moreover, any noose $\geom{N}_k$, $k > i$, that crosses $\tilde{\geom{N}}_i$ also crosses~$\geom{N}_i$ because
$\tilde{\geom{C}}_i \subseteq \geom{C}_i$. Thus, by replacing
$\geom{N}_i$ with $\tilde{\geom{N}}_i$, the number of pairs of
crossing nooses with indices at least~$i$ is strictly decreased. This
means that after a finite number of such replacements we reach a
sequence of pairwise noncrossing nooses.

\proofparagraph{Labelings that allow gluing.}
Based on the sequence $\T$ of edge bipartitions of~$G$ and the nooses
we have fixed above for each edge bipartition, we now define a tuple,
the labeling, for each edge bipartition that can be encoded using
$\signumbits$ bits and that has the property that, if two edge
bipartitions have the same labeling, then the corresponding graphs
can be glued in a way that results in an \loutp\ graph, as stated in
\cref{sepseq}.

 We need some notation and definitions. Denote by $\geom{F}$ the face
in the sphere embedding of~$G$ that corresponds to the outer face of
$G$ in the planar embedding. Pick a point~$y \in \geom{F}$ in such a
way that $y$ is not equal to any vertex and not contained in any edge
or noose~$\geom{N}_i$. For every noose~$\geom{N}_i$ we define a
bijection $\beta_i \colon \{1, 2, \ldots, |M(C_i, D_i)|\} \to M(C_i, D_i)$
corresponding to the order in which the vertices in $M(C_i, D_i)$
appear in a clockwise (in the plane embedding) traversal of $\geom{N}_i$ that starts in an arbitrary
point.
We furthermore define a mapping~$\gamma_i$
from $\{1, 2, \ldots, |M(C_i, D_i)|\}$ to the set of faces touched
by $\geom{N}_i$ according to their occurrence in the traversal of $\geom{N}_i$
above.
More precisely, if face~$\geom{G}$ occurs in the traversal of
$\geom{N}_i$ between vertex~$\beta_i(\ell)$ and $\beta_i(\ell +
1)$ (wherein we put~$\beta_i(|M(C_i, D_i)| + 1)$ equal to~$\beta_i(1)$), then $\gamma_i(\ell) = \geom{G}$.
Finally, say that a face path $P$ is
\emph{contained} in a closed disk $\geom{E}$ if each vertex in $P$ is
contained in~$\geom{E}$.

The
\emph{labeling} of $(C_i, D_i)$ is a tuple~$(b, \Upsilon_1, \Upsilon_2)$ defined as follows.
\begin{compactitem}
\item $b = 1$ if $y \in \geom{C}_i$ and $b = 0$ otherwise\label[piece]{siginf:x} (this encodes which of the disks $\geom{C}_i$ or $\geom{D}_i$
  contains the ``left'' side of the graph).
\item $\Upsilon_1$ is the function defined as follows. Let
  $\beta, \gamma, \geom{C}, \geom{D}$ be symbols.
  Function $\Upsilon_1$ maps each triple $(k, \xi, \geom{X})$ such that $k \in \{1, 2, \ldots, |M(C_i, D_i)|\}$, $\xi \in \{\beta, \gamma\}$, and $\geom{X} \in \{\geom{C}, \geom{D}\}$ to the length of a shortest face path that is contained in~$\geom{X}_i \cup \geom{N}_i$ and that runs from $\xi_i(k)$ to~$\geom{F}$. (Herein, $\xi_i$ refers to $\beta_i$ if $\xi = \beta$ for example and analogously for~$\geom{X}$.)\label[piece]{siginf:pathtoouter}
\item $\Upsilon_2$ is the function that maps each quintuple
  $(k_1, k_2, \xi, \psi, \geom{X})$ such that $k_1, k_2 \in \{1, 2, \ldots, |M(C_i, D_i)|\}$, $\xi, \psi \in \{\beta, \gamma\}$, and $\geom{X} \in \{\geom{C}, \geom{D}\}$ to the length of a shortest face path that is contained in~$\geom{X}_i \cup \geom{N}_i$ from $\xi_i(k_1)$ to $\psi_i(k_2)$.
  \label[piece]{siginf:pathininner}
\end{compactitem}
If the paths above do not exist, or the lengths are larger than
$\layers$, then put $\infty$ instead of the length~$\ell$.

\proofparagraph{Definition of the desired edge bipartition sequence.}
Take \[\S \coloneqq ((C_i, D_i, \beta_i))_{i = 1}^s\] where, in a slight
abuse of notation, $((C_i, D_i))_{i = 1}^s$ is the longest subsequence
of $\T$ in which all edge bipartitions~$(C_i, D_i)$ have the same
labeling. Two edge bipartitions (defined via nooses) which have the
same labeling are shown in \cref{fig:orange} and in
\cref{fig:glue}.
We claim that $\S$ fulfills the conditions of
\cref{sepseq}.

\proofparagraph{Length of the sequence.}  To see that the length $s$ of
$\S$ is large enough, recall that sequence \T contains at
least~$2\log(n)$ entries. The longest subsequence of $\T$ with
pairwise equal labelings has length at least~$2\log(n)$ divided by
the number of different labelings $(b, \Upsilon_1, \Upsilon_2)$. It is not hard to see that there
are at most two possibilities for $b$, at most $(r +
1)^{2r \cdot 2 \cdot 2} = (r + 1)^{8r}$ possibilities for~$\Upsilon_1$, and at most $(r + 1)^{2r \cdot 2r \cdot 2
  \cdot 2 \cdot 2} = (r + 1)^{32r^2}$ possibilities for~$\Upsilon_2$,
giving an overall upper bound on the number of different labelings of
\[ 2 \cdot (r + 1)^{8r} \cdot (r + 1)^{32r^2} \meq 2 \cdot (r +
1)^{32r^2 + 8r}.\] Thus $\S$ has length at least
$\log(n)/\signumhalved$.

\proofparagraph{Outerplanarity number of the glued graphs.}
For each $(C_i, D_i)$, $(C_j,
D_j) \in \S$, $i < j$, we have $C_i \subsetneq C_j$ and $D_i
\supsetneq D_j$. Thus to prove \cref{sepseq} it remains to show that
$\glueg \coloneqq G\langle C_i\rangle \glue G \langle D_j \rangle$ is
\loutp. To see this, we first describe how to obtain an \loutp\
embedding for a supergraph~$G'$ of \glueg\ from $G$'s embedding in the
sphere. %
Graph~$G'$ is defined below and is isomorphic to \glueg\ except that
it may contain multiple copies of some edges in \glueg.

Recall that the nooses $\geom{N}_i$ and $\geom{N}_j$ are
noncrossing. Hence the closed disks~$\geom{C}_i \cup \geom{N}_i$ and
$\geom{D}_j \cup \geom{N}_j$ can intersect only in their nooses $\geom{N}_i$
and $\geom{N}_j$. We now consider dislocating these disks from the
sphere, and identifying their boundaries $\geom{N}_i$ and
$\geom{N}_j$, thus creating a new sphere. \Cref{fig:glue} serves as an
example: Consider the left picture as a plane embedding of the sphere. Take $\geom{C}_i$ to be the outer region of the outermost noose and $\geom{D}_j$ to be the the inner region of the innermost noose. Then dislocate $\geom{C}_i$ and $\geom{D}_j$ from this sphere and glueing them on their nooses; the right picture shows a plane embedding of a sphere that can be obtained in this way.
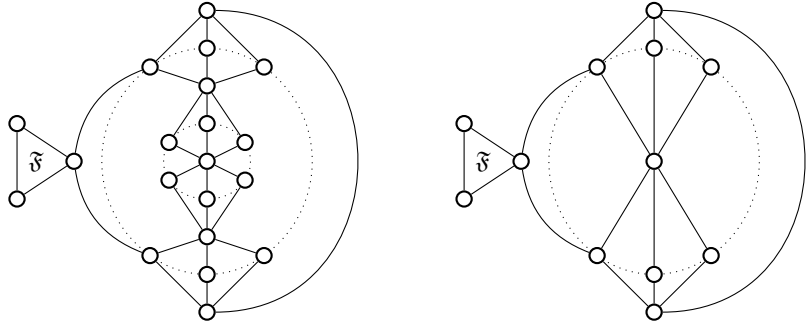
\begin{figure}[t]%
  \centering\hfill\begin{tikzpicture}[scale=.49, every edge/.style={draw, semithick}]
\mppath[dotted, thick]{(-1,0.5) .. (-1,-0.5) .. (0,-1) .. (1,-0.5) ..  (1,0.5) ..  (0,1) .. cycle}
\mppath[dotted, thick]{(-1.5,-2.5) .. (0,-3) .. (1.5,-2.5) ..  (1.5,2.5) .. (0,3) .. (-1.5,2.5) .. cycle}

\node[vert] (v15) at (-1,0.5) {};
\node[vert] (v14) at (0,1) {};
\node[vert] (v16) at (1,0.5) {};
\node[vert] (v20) at (-1,-0.5) {};
\node[vert] (v19) at (0,-1) {};
\node[vert] (v18) at (1,-0.5) {};
\node[vert] (v6) at (0,2) {};
\node[vert] (v10) at (0,-2) {};
\node[vert] (v12) at (0,-3) {};
\node[vert] (v9) at (-1.5,-2.5) {};
\node[vert] (v11) at (1.5,-2.5) {};
\node[vert] (v5) at (0,3) {};
\node[vert] (v4) at (-1.5,2.5) {};
\node[vert] (v7) at (1.5,2.5) {};
\node[vert] (v8) at (0,4) {};%
\node[vert] (v13) at (0,-4) {};
\node[vert] (v3) at (-3.5,0) {};
\node[vert] (v1) at (-5,1) {};
\node[vert] (v2) at (-5,-1) {};
\draw  (v1) edge (v2);
\draw  (v1) edge (v3);
\draw  (v2) edge (v3);
\draw  (v3) edge[bend left] (v4);
\draw  (v5) edge (v6);
\draw  (v4) edge (v6);
\draw  (v6) edge (v7);
\draw  (v7) edge (v8);
\draw  (v8) edge (v5);
\draw  (v4) edge (v8);
\draw  (v3) edge[bend right] (v9);
\draw  (v9) edge (v10);
\draw  (v10) edge (v11);
\draw  (v12) edge (v10);
\draw  (v12) edge (v13);
\draw  (v9) edge (v13);
\draw  (v13) edge (v11);
\draw  (v6) edge (v14);
\draw  (v6) edge (v15);
\node[vert] (v17) at (0,0) {};
\draw  (v16) edge (v6);
\draw  (v17) edge (v15);
\draw  (v17) edge (v14);
\draw  (v17) edge (v16);
\draw  (v17) edge (v18);
\draw  (v19) edge (v17);
\draw  (v20) edge (v17);
\draw  (v19) edge (v10);
\draw  (v20) edge (v10);
\draw  (v18) edge (v10);
\draw  (v13) edge[bend right=90,looseness=1.6] (v8);

\node at (-4.5,0) {\geom{F}};
\end{tikzpicture}
  \hfill\begin{tikzpicture}[scale=.49, every edge/.style={draw, semithick}]
\mppath[dotted, thick]{(-1.5,-2.5) .. (0,-3) .. (1.5,-2.5) ..  (1.5,2.5) .. (0,3) .. (-1.5,2.5) .. cycle}

\node[vert] (v12) at (0,-3) {};
\node[vert] (v9) at (-1.5,-2.5) {};
\node[vert] (v11) at (1.5,-2.5) {};
\node[vert] (v5) at (0,3) {};
\node[vert] (v4) at (-1.5,2.5) {};
\node[vert] (v7) at (1.5,2.5) {};
\node[vert] (v8) at (0,4) {};%
\node[vert] (v13) at (0,-4) {};
\node[vert] (v3) at (-3.5,0) {};
\node[vert] (v1) at (-5,1) {};
\node[vert] (v2) at (-5,-1) {};
\draw  (v1) edge (v2);
\draw  (v1) edge (v3);
\draw  (v2) edge (v3);
\draw  (v3) edge[bend left] (v4);
\draw  (v7) edge (v8);
\draw  (v8) edge (v5);
\draw  (v4) edge (v8);
\draw  (v3) edge[bend right] (v9);
\draw  (v12) edge (v13);
\draw  (v9) edge (v13);
\draw  (v13) edge (v11);
\node[vert] (v17) at (0,0) {};
\draw  (v13) edge[bend right=90,looseness=1.6] (v8);

\node at (-4.5,0) {\geom{F}};
\draw  (v17) edge (v4);
\draw  (v5) edge (v17);
\draw  (v7) edge (v17);
\draw  (v17) edge (v11);
\draw  (v12) edge (v17);
\draw  (v17) edge (v9);
\end{tikzpicture}%
  \hfill\mbox{}
  \caption{Left: A graph embedded in a subdisk of the sphere which has been projected onto the plane. We show two nooses (dotted) that induce edge bipartitions. The labelings of the two edge bipartitions are the same if we assume that both left sides (the $C_i$'s) of the bipartitions contain the outermost edges in the drawing and if we furthermore assume that the corresponding mappings~$\beta_i$'s are the clockwise orderings of the vertices on the noose starting with the topmost vertex.
    Right: The graph resulting from gluing along the two nooses.}%
  \label{fig:glue}%
\end{figure}%

Observe that there is a homeomorphism~$\phi \colon \geom{C}_i \cup \geom{N}_i
\to \geom{C}_j \cup \geom{N}_j$ since both point sets are closed discs.
Recall that, since $(C_i, D_i)$ and $(C_j, D_j)$ have the same labeling, we have that the vertex~$y \in \geom{F}$ which we fixed above is either both in $\geom{C}_i$ and $\geom{C}_j$ or both in $\geom{D}_i$ and $\geom{D}_j$.
Intuitively, the nooses are thus nested inside each other in the plane.
Recall furthermore that the vertices in $M(C_i,D_i)$ and $M(C_j,D_j)$ are
enumerated by $\beta_i$ and $\beta_j$, respectively, according to clockwise
traversals of the corresponding nooses.
Hence, we may choose the homeomorphism~$\phi \colon \geom{C}_i \cup \geom{N}_i
\to \geom{C}_j \cup \geom{N}_j$ such
that it has the following
properties.
\begin{mathenum}
\item For the two traversals of the nooses that define $\beta_i$ and
  $\beta_j$, respectively, we have that the initial points of the
  traversals are mapped onto each other by $\phi$ and, if point $x$ comes
  before point $z$ in the traversal of $\geom{N}_i$ used to define $\beta_i$, then $\phi(z)$ comes
  after $\phi(x)$ in the traversal of
  $\geom{N}_j$ used to define $\beta_j$.\label[property]{respect}
\item For each $k \in \{1, 2, \ldots, |M(C_i, D_i)|\}$ we have
  $\phi(\beta_i(k)) = \beta_j(k)$.\label[property]{identify}
\end{mathenum}
Denote by $G'$ the \splane\ graph induced by the point set $\phi(G
\cap \geom{C}_i) \cup (G \cap \geom{D}_i)$. We claim that from $G'$ we
can derive an \loutp\ embedding of~\glueg.

We first prove that \glueg\ is an edge-induced subgraph of $G'$
without loss of generality: We may assume that $G$ and \glueg\ have
the same vertex set without loss of generality by \cref{identify} of
homeomorphism~$\phi$. Since each edge $e \in C_i$ is contained in
$\geom{C}_i$, it is also present in $\phi(\geom{C}_i)$ and thus in
$G'$. Moreover, each edge $e \in D_j$ is trivially contained in
$\geom{D}_j$, hence, also in $G'$. Thus, we may assume that \glueg\ is
an edge-induced subgraph of $G'$ whence from any \loutp\ embedding of~$G'$ we obtain an \loutp\ embedding of~\glueg.

Graph~$G'$ has a sphere embedding due to the way it was
constructed. We now prove that from this embedding we can obtain an
\loutp\ one. This then finishes the proof of \cref{sepseq}. Note that there is a face
in the sphere embedding of $G'$ that contains $y$ or $\phi(y)$ due to the flag $b$ in the labelings (i.e.\ if $b = 1$ then there is a face containing $\phi(y)$, otherwise there is a face containing~$y$).
We denote this face by
$\geom{F}$. By removing a point contained in the face~$\geom{F}$ from
the sphere, we obtain a topological space homeomorphic to the
plane. Fix a corresponding homeomorphism~$\delta$ and note that,
applying $\delta$ to $G'$, we obtain a planar embedding of $G'$ with
the outer face~$\delta(\geom{F})$. In the following we assume that
$G'$ is embedded in this way and, for the sake of simplicity, denote
$\delta(\geom{F})$ by $\geom{F}$.

To conclude the proof it remains to show that $G'$ is \loutp. Recall that a graph is \loutp\ if and only it has an
embedding in the plane such that each vertex~$v$ has an incident face
with a face path of length at most~$\layers$ to the outer
face~$\geom{F}$. Call such a path \emph{good} with respect to~$v$.

It remains to show that each vertex in $G'$ has a good path.
It suffices to prove this for vertices in~$\geom{C}_i$ whose good paths in~$G$ are not contained in $\geom{C}_i$ and vertices in~$\geom{D}_j$ whose good paths in~$G$ are not contained in~$\geom{D}_j$ as the remaining ones are also present in~$G'$.
Consider a vertex in $\geom{C}_i$ whose good path~$P$ is not contained in~$\geom{C}_i$.
Observe that each subpath of $P$ that is not contained in $\geom{C}_i$ is contained in $\geom{D}_i \cup \geom{N}_i$.
We claim that we can replace every maximal face subpath of $P$ which is contained in~$\geom{D}_i \cup \geom{N}_i$ by a face path contained in $\geom{D}_j \cup \geom{N}_j$ in such a way that the resulting sequence~$P'$ is a face path in $G'$.
Moreover, $P'$ is at most as long as $P$.

Consider a maximal face subpath~$S$ of $P$ which is contained
in~$\geom{D}_i \cup \geom{N}_i$.
Each end of~$S$ is either a vertex in
$M(C_i, D_i)$, or a face. If an end of~$S$ is a face, then it can
either be the outer face~$\geom{F}$ or a face~$\geom{G} \neq \geom{F}$
which is intersected by~$\geom{N}_i$. (Note that not both ends of $S$
can be $\geom{F}$ as $P$ is a shortest path to $\geom{F}$.)

We now use the labelings to show that there is a suitable replacement for~$S$.
Consider the case where one end of $S$ is $\geom{F}$; we will indeed only treat this case explicitly.
The other case is analogous.
Consider the subcase where the other end of $S$ is a vertex~$v$. 
Associate with $S$ the tuple $(k,\beta,\geom{D})$, where $\beta$ and $\geom{D}$ are the corresponding symbols from the labelings.
The first entry, $k$, is an integer equal to $\beta_i^{-1}(v)$.
Let $\Upsilon_1^i$ be the function $\Upsilon_1$ from the labeling of $(C_i, D_i)$ and $\Upsilon_1^j$ the function $\Upsilon_1$ from the labeling of $(C_j, D_j)$.
Since the two labelings are the same, $\Upsilon_1^i(k, \beta, \geom{D}) = \Upsilon_2^j(k, \beta, \geom{D}) \leq \ell$, where $\ell$ is the length of~$S$ (observe that $\ell \leq \lrs$).
Hence, there is a face
path~$S'$ in $\geom{D}_j$ with the ends~$\geom{F}$ and
$\beta_j(k)$.
By \cref{identify} of homeomorphism~$\phi$ we have that $\beta_i(k) = \beta_j(k)$ in $G'$. 
Thus we can replace $S$ by $S'$ in $P$, that is, afterwards (i) the maximal subsegment of $P$ contained in $G'$ is a face path and (ii) the number of maximal segments of $P$ that are not contained in $G'$ strictly decreases.

Now consider the subcase where the end of $S$ that is different from $\geom{F}$ is a face~$\geom{G}$.
Associate with $S$ the tuple $(k,\gamma,\geom{D})$, where $\gamma$ and $\geom{D}$ are the corresponding symbols from the labelings.
The first entry, $k$, is defined as follows.
Draw an arc~$\geom{A}$ contained in $\geom{G}$ between the two vertices that~$P$~visits before and after $\geom{G}$ such that $\geom{A}$ and
$\geom{N}_i$ intersect in only one point.
Call this intersection point~$x$.
Define $k \in \univ{N}$ such that, in the traversal of
$\geom{N}_i$ that defines $\beta_i$,
vertex~$\beta_{i}(k)$ comes
before $x$ and vertex~$\beta_i(k + 1)$ comes after $x$ (where we set $k
+ 1 = 1$ if $k = |M(C_i, D_i)|$).
Let $\Upsilon_1^i$ be the function $\Upsilon_1$ from the labeling of $(C_i, D_i)$ and $\Upsilon_1^j$ the function $\Upsilon_1$ from the labeling of $(C_j, D_j)$.
Since the two labelings are the same, $\Upsilon_1^i(k, \gamma, \geom{D}) = \Upsilon_2^j(k, \gamma, \geom{D}) \leq \ell$, where $\ell$ is the length of~$S$ (observe that~$\ell \leq \lrs$).
Hence, there is a face
path~$S'$ in $\geom{D}_j$ with the ends~$\geom{F}$ and~$\gamma_j(k)$.

We claim that $\gamma_j(k) = \geom{G}$ and $\gamma_i(k)$ describe the same
entities in $G'$.
Consider the face $\geom{H} = \gamma_j(k)$ in~$G$. By definition,
$\geom{G}$~intersects $\geom{N}_i$ in the segment~$\geom{S}_i$ of the
traversal defining $\beta$ between $\beta_i(k)$ and~$\beta_i(k + 1)$. Similarly, $\geom{H}$ intersects $\geom{N}_i$
in the segment $\geom{S}_j$ between $\beta_j(k)$ and
$\beta_j(k + 1)$. In $G'$, face~$\geom{G}$ is represented by
$\phi(\geom{G} \cap (\geom{C}_i \cup \geom{N}_i))$ and face $\geom{H}$
is represented by $\geom{H} \cap (\geom{D}_j \cup \geom{N}_j) =
\geom{H} \cap (\geom{D}_j \cup \geom{N}_i)$. Moreover, segments
$\geom{S}_i$ and $\geom{S}_j$ are identified by homeomorphism $\phi$
because of its \cref{respect}. Hence, $\phi(\geom{G} \cap (\geom{C}_i
\cup \geom{N}_i))$ and $\geom{H} \cap (\geom{D}_j \cup \geom{N}_i)$
are merged into one face in $G'$. Thus, indeed $\gamma_j(k)$ and~$\gamma_i(k)$ describe the same entities in $G'$.
Thus we can replace $S$ by $S'$ in $P$, that is, afterwards (i) the maximal subsegment of $P$ contained in $G'$ is a face path and (ii) the number of maximal segments of $P$ that are not contained in $G'$ strictly decreases.

The proof that we can replace $S$ by a corresponding path $S'$ in $P$
in the case that $S$ does not have $\geom{F}$ as an
end is analogous to the above and omitted. Hence, replacing all
maximal face subpaths of $P$ that are not contained in $\geom{C}_i$,
we obtain a good path in $G'$. Finally, the case that the good path of
a vertex in $\geom{D}_j$ is not contained in $\geom{C}_i$ is symmetric
to the above and also omitted.

Summarizing, since each vertex in $G$ has a good path, so has each
vertex in $G'$, meaning that $G'$ is \loutp. Since \glueg\ is an
edge-induced subgraph of $G'$, also \glueg\ is \loutp. This concludes
the proof of \cref{sepseq}.
\end{proof}

\section{Concluding remarks}\label{sec:concl}
The main contribution of this work is to show that twins are
crucial for instances of \PS\ but the number of crucial twins is upper-bounded in terms of the number~$m$ of hyperedges and the outerplanarity number~$\lrs$ of a support. As a result, we can safely remove noncrucial twins.
More specifically, in linear time we can transform any instance of
\PS\ into an equivalent one whose size is upper-bounded by a function of $m$
and~$r$ only.
In turn, this implies \fpty\ \wrt\ $m + r$. It is fair to say that
due to the strong exponential growth in $m$ and~$r$ this result is mainly of
classification nature.  Improved bounds (perhaps
based on further data reduction rules) are highly desirable
for practical applications.

Two further directions for future research are as follows.
First, above we only showed how to shrink the size of the input instance. We also need an efficient algorithm to construct an \loutp\ support for such an instance. A naive algorithm for this task has running time~$n^{\Oh(n)}$: Since every planar graph on~$n$ vertices has~$\Oh(n)$ edges, we may enumerate all planar graphs in~$n^{\Oh(n)}$~time by considering all possible endpoints for each edge. For each enumerated graph, we then test whether one of them is an \loutp\ support. Can we improve over this brute-force algorithm?  %

Second, it is interesting to gear the parameters under consideration more towards practice. In \cref{sec:application} above we attached signatures to each edge bipartition in a sequence of edge bipartitions of a support and we could reduce our input only if there were sufficiently many edge bipartitions with the same signature. This signature contained, among other information, the twin class of each vertex of the separator induced by the edge bipartition. Clearly, if all of these at least $2^{m\lrs}$ different types of signatures are present, then this will lead to an illegible drawing of the hypergraph (and still, in absence of better upper bounds, we cannot shrink our input). It seems thus worthwhile to contemplate parameters that capture legibility of the hypergraph drawing by restricting further the number of possible signatures. 

Finally, an obvious open question is whether finding a \emph{planar} support is (linear-time) \fpt\ with respect to the number~$m$ of hyperedges only. A promising direction might be to show that there is a planar representative support (as in \cref{def:representative}) which has treewidth upper-bounded by a function of~$m$. From this, we would get a sequence of gluable subgraphs similarly to the one we have used here, amenable to the same approach as in \cref{sec:application}.

\paragraph{Acknowledgments.} We thank an anonymous referee who pointed out the ideas for the proof in \cref{sec:nonuni-fpt}.

\paragraph{Funding.}
 \marginpar{\vspace{2cm}
   \includegraphics[width=35px]{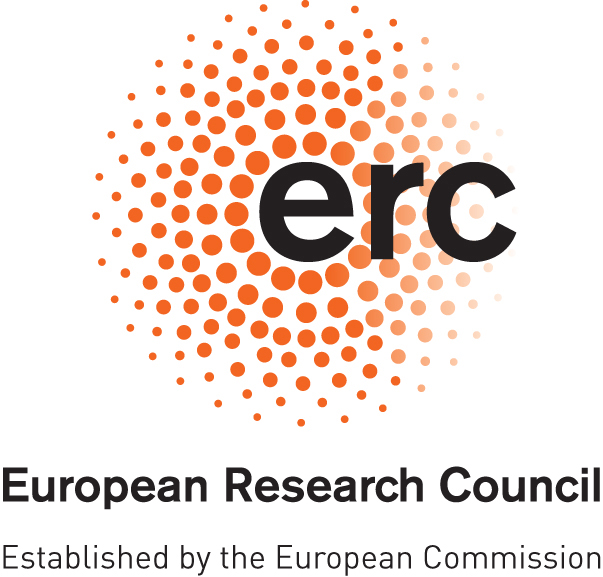}}\marginpar{\includegraphics[width=35px]{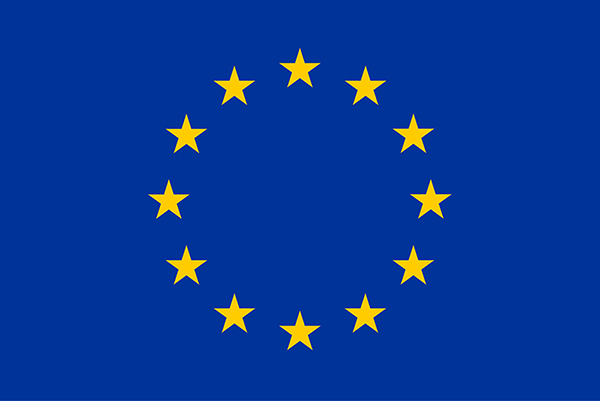}}
Significant parts of the work were done while all authors were with
TU~Berlin and while RvB, IK, and MS were supported by the DFG project
NI~369/12.  %
CK was supported by DFG project KO~3669/4-1.
MS was supported by the People Programme (Marie Curie Actions) of the European Union's Seventh Framework Programme (FP7/2007-2013) under REA grant agreement number {631163.11}, by the Israel Science Foundation (grant no.\ 551145/14), and by the European Research Council (ERC) under the
    European Union’s Horizon 2020 research and innovation programme
    under grant agreement number~714704, and by the Alexander von Humboldt Foundation.
    Some work of MS was done while with Ben-Gurion University of the Negev and with University of Warsaw.

\printbibliography

\end{document}